\documentclass[journal]{IEEEtran}

\usepackage{graphicx}
\usepackage{color}
\usepackage{placeins}
\usepackage{float}
\usepackage{tabularx,colortbl}
\usepackage{amssymb}
\usepackage{amsthm}
\usepackage{algorithm}  
\usepackage{algorithmic}
\usepackage{cite}
\usepackage{url}
\usepackage{amsmath}
\usepackage{makecell}
\usepackage{subfigure}
\usepackage{caption}
\usepackage{enumitem}
\usepackage{booktabs}

\theoremstyle{plain}

\newtheorem{lemma}{Lemma}
\newtheorem{proposition}{Proposition}
\theoremstyle{plain}

\IEEEoverridecommandlockouts

\begin{document}
	\title{Fundamental CRB-Rate Tradeoff of Rydberg Atomic Receivers for Integrated Sensing and Communication}
	\author{Yin Zhang, Jiayi Zhang,~\IEEEmembership{Senior Member,~IEEE}, Bokai Xu,\\ Jieao~Zhu,~\IEEEmembership{Student Member,~IEEE}, and Bo Ai,~\IEEEmembership{Fellow,~IEEE}
		\thanks{Y. Zhang, J. Zhang, B. Xu, and B. Ai are with the State Key Laboratory of Advanced Rail Autonomous Operation and the School of Electronics and Information Engineering, Beijing Jiaotong University, Beijing 100044, China (e-mails: 25120225@bjtu.edu.cn, zhangjiayi@bjtu.edu.cn, 20251197@bjtu.edu.cn, boai@bjtu.edu.cn).
		J. Zhu is with the Huawei Wireless RAN Research Department, Shanghai, China (e-mail: zja17@tsinghua.org.cn). }
	}
	\IEEEaftertitletext{\vspace{-2\baselineskip}}
	\maketitle
	\begin{abstract}
		Integrated sensing and communication (ISAC) is a key technology for next generation wireless networks, requiring highly sensitive, compact receiver front ends with multiband coverage.
		Rydberg atomic receivers (RARs) provide a promising architecture that satisfies these requirements owing to their high sensitivity and continuous frequency tuning capability.
		Unlike in classical ISAC systems utilizing orthogonal resource allocation, executing concurrent multi band tasks within a single atomic ensemble relies on shared energy state transitions.
		Owing to this shared structure, the communication and sensing tradeoff in RAR-based ISAC introduces an atomic-response tradeoff beyond the classical subspace and deterministic-random tradeoffs. This paper studies a multi-branch RAR (MB-RAR) architecture for joint communication and radar sensing. By solving the five level Lindblad master equation at the local oscillator (LO) biased operating point and linearizing the communication and sensing radio frequency (RF) signals as small perturbations, we obtain the double-sideband dynamic signal response of the MB-RAR. 
The model reveals two intrinsic constraints. First, increasing the coupling laser strength broadens the response bandwidth but can weaken the atomic response. Second, strengthening one branch's LO can suppress the other branch's conversion gain through the shared Rydberg state, even when the latter's LO bias remains fixed, which is termed the inter band gain inhibition (IBGI). Based on these mechanisms, we formulate the achievable Cramér--Rao bound (CRB)-rate region and develop a rate-constrained joint search over coupling strength and LO power allocation.
		Numerical results validate the proposed response model and show that joint optimization provides a more complete characterization of the achievable CRB-rate region. Compared with idealized RF-ISAC time sharing, the RAR achieves a lower CRB endpoint, whereas the RF-ISAC attains a higher communication-rate endpoint.
	\end{abstract}
	
	\begin{IEEEkeywords}
		Integrated sensing and communication, Rydberg atomic receivers, wireless communication, quantum sensing, CRB-rate region.
	\end{IEEEkeywords}
	\section{Introduction}
	The evolution toward sixth generation (6G) wireless networks is predicated on the convergence of the physical and digital worlds, with integrated sensing and communication (ISAC) emerging as a critical enabler \cite{gong2026rydbergTC, liu2025sensing}. By unifying radar sensing and data transmission into a single hardware platform, ISAC enables the joint reuse of spectrum and hardware resources \cite{xiong2024torch}. However, current ISAC implementations, predominantly based on classical solid state electronics, are approaching physical bottlenecks \cite{xu2023resource}. These bottlenecks originate from the dependence of existing systems on shared radio frequency (RF) front end components, such as metallic antennas, low noise amplifiers (LNAs), and mixers. In addition, the physical dimensions of classical metallic antennas strictly dictate their resonance frequencies and, as captured by the Chu limit, fundamentally bound their instantaneous bandwidth and efficiency \cite{hua2023mimo}. Their conductive structures also introduce metallic scattering and mutual coupling, which can aggravate self interference in dense arrays \cite{zeng2025holographic}. These limitations motivate alternative reception mechanisms beyond conventional electron charge induction, including physics based quantum reception \cite{gong2026rydbergTC, yuan2025electromagnetic}.
	
Rydberg atomic receivers (RARs) provide an alternative RF reception mechanism based on quantum sensing \cite{chen2025harnessing, xu2025channel}. In a typical RAR, the incident RF field is mapped to an optical response of the atomic medium through Rydberg state transitions, while a strong microwave local oscillator (LO) enables quantum heterodyne reception and downconverts weak RF signals into measurable optical intensity variations \cite{EIT, jing2020atomic}. This reception principle provides several features that are relevant to ISAC systems. First, the resonant frequencies of RARs are determined by atomic energy level transitions rather than by metallic antenna geometries, enabling electric field sensing over a broad spectral range, from kHz bands \cite{zhang2026self} to THz bands \cite{yao2026ultra}. Second, the all dielectric vapor cell structure has a small radar cross section (RCS) and can reduce metallic scattering and antenna mutual coupling. Third, optical readout provides electrical isolation and avoids part of the cascaded thermal noise accumulation introduced by conventional RF front end chains. These properties make RARs attractive for ISAC, where communication and sensing functions require sensitive, compact, and reconfigurable RF reception \cite{cui2024multi,gong2025rydbergMU}. In particular, multiple microwave transitions can be addressed within the same atomic ensemble, enabling concurrent multi band reception without separate RF front ends or combiners for each band \cite{cui2025towards, QS-34}.
	
Existing research has demonstrated the feasibility of RARs for individual functionalities, including high fidelity AM, FM, phase modulated, and QAM communication demodulation \cite{song2019rydberg, anderson2020atomic, holloway2019detecting} as well as high sensitivity electric field metrology and radar sensing \cite{QS-24, gordon2019weak, continu4}. Recent multi band and multi carrier RAR studies have further shown that multiple microwave channels can be addressed within a common atomic medium \cite{zhu2025raq, wang2025multi, kim2025multi}. However, the fundamental communication and sensing tradeoff of such receivers remains insufficiently understood. In classical ISAC systems, this tradeoff is typically characterized through the subspace tradeoff and deterministic random tradeoff, which depend on resource allocation and signaling design, respectively \cite{xiong2024torch, xiong2023fundamental}. In RAR-based ISAC, these classical tradeoffs are further coupled with the atomic response of the receiver. When communication and sensing signals are simultaneously received by a single atomic ensemble, the two functions are mediated by shared Rydberg state coherences and microwave induced transitions. Consequently, the achievable Cramér--Rao bound (CRB)-rate (C-R) performance is determined not only by waveform or resource allocation, but also by how the coupling laser and LO branches configure the common atomic response. A system level model is therefore needed to quantify the bandwidth response tradeoff, inter band gain inhibition (IBGI), and their impact on the C-R region of RAR-based ISAC.
	
	To address this gap, this paper studies a multi-branch RAR (MB-RAR) architecture for ISAC and develops an analytical framework for characterizing its C-R tradeoff. The MB-RAR architecture uses different LO branches to address resolvable Rydberg microwave transitions, while all branches are still coupled through the shared Rydberg state and the common optical electromagnetically induced transparency (EIT) channel. This structure makes the RAR-ISAC tradeoff different from that of conventional RF-ISAC systems, because the communication and sensing performance is shaped not only by signaling or resource allocation, but also by the receiver side atomic response. The main contributions are summarized as follows.
	\begin{itemize}
		\item \textbf{Unified Quantum Transconductance Modeling:} We derive the dynamic signal response of the MB-RAR by solving the five level Lindblad master equation at the LO biased operating point and linearizing the communication and sensing RF signals as small perturbations. The resulting branch atomic response connects the microscopic density matrix dynamics with the equivalent communication and sensing observation models.
		
		\item \textbf{Analytical Characterization of Intrinsic Physical Tradeoffs:} We characterize two RAR specific effects that govern the C-R tradeoff. First, the coupling laser Rabi frequency $\Omega_c$ controls the common EIT channel. It changes the Autler--Townes (AT) splitting and the dynamic small signal response bandwidth, which can support a higher communication rate over an appropriate operating range, but it also changes the atomic response strength.
Second, the LO branches jointly determine the atomic populations, coherences, and optical conversion response. We quantify IBGI by comparing branch gains at the same own-branch LO bias, and separate the allocation sensitivity into direct mixing, own-branch response, and cross-branch response terms. Numerical validation shows the impact of IBGI and its optimal coupling settings.
		
		\item \textbf{Characterization of the Quantum C-R Region:} Based on the derived branch atomic response, we formulate the achievable C-R region of the MB-RAR ISAC. We further implement a rate-constrained joint optimization (RCJO) algorithm based on the $\epsilon$-constraint formulation and feasible coordinate updates to characterize the Pareto boundary. Simulation results show that joint tuning balances bandwidth broadening and IBGI. Additionally, compared with conventional RF-ISAC time sharing, RAR-based ISAC achieves smaller sensing CRBs under the default parameters, while the RF-ISAC baseline attains a higher communication-rate endpoint through its larger receive signal-to-noise ratio (SNR).
	\end{itemize}   
	
	The remainder of this paper is organized as follows. Section \ref{sec:isac_sysmodel} describes the MB-RAR system architecture, the underlying quantum master equation, and the Liouvillian small signal response model. Section \ref{sec:tradeoffs} discusses the physical nature of the EIT window and LO power tradeoffs. Section \ref{sec:C-R} formulates the communication rate and sensing CRB based on the quantum transconductance, characterizes the achievable C-R region, and details the proposed RCJO algorithm. Numerical results and performance analyses are presented in Section \ref{sec:numerical}, and Section \ref{sec:conclusion} concludes the paper.
	
	\textit{Notation:} Throughout this paper, boldface uppercase letters (e.g., $\mathbf{H}$) denote matrices, and boldface lowercase letters (e.g., $\mathbf{a}$) denote vectors. The superscripts $(\cdot)^{\mathrm{T}}$, $(\cdot)^{*}$, and $(\cdot)^{\mathrm{H}}$ denote the transpose, complex conjugate, and conjugate transpose, respectively. $\|\cdot\|$ denotes the Euclidean norm, $\mathbf I_M$ denotes the $M\times M$ identity matrix, $\operatorname{Re}\{\cdot\}$ and $\operatorname{Im}\{\cdot\}$ denote the real and imaginary parts, $\mathrm{vec}(\cdot)$ denotes column-wise vectorization, and $\mathrm{Tr}(\cdot)$ denotes the trace. $\mathbb{E}[\cdot]$ denotes the statistical expectation, and $\mathcal{CN}(\mu, \sigma^2)$ represents a circularly symmetric complex Gaussian distribution. The symbol $\hbar$ denotes the reduced Planck constant.
	
	\section{ISAC System Modeling}
	\label{sec:isac_sysmodel}
	\begin{figure}[!t]
		\centering
		\includegraphics[width=3in]{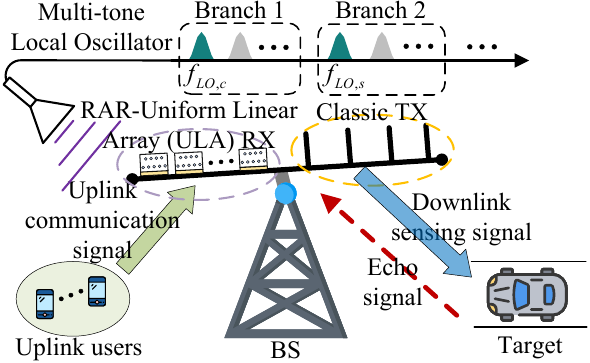}
		\caption{The architecture of the multi-branch Rydberg atomic receivers for integrated sensing and communication, where the communication and sensing LO branches are explicitly denoted by $\mathrm{LO}_{c}$ and $\mathrm{LO}_{s}$, respectively.}
		\label{fig:system_architecture}
	\end{figure}
	We consider an uplink ISAC system equipped with an MB-RAR array. As illustrated in Fig.~\ref{fig:system_architecture}, the receiver adopts the principle of heterodyne reception and is driven by a multi tone LO field \cite{zhu2025raq,wang2025multi,kim2025multi,cui2024multi,gong2025rydbergMU}. The MB-RAR architecture is organized into task oriented atomic response branches, where each branch is associated with one ISAC function.
	
	\subsection{Radio Frequency Signal Model}
	The MB-RAR array contains $M$ independent vapor cells, where the output of the $m$-th vapor cell forms the $m$-th array observation. The vapor cells are arranged along the uniform linear array (ULA) direction, and adjacent cells are separated by the inter-element spacing $d$. Each vapor cell has a length $\ell_{\rm cell}$ along its optical axis, which is perpendicular to the ULA direction, as shown in Fig.~\ref{fig:system_architecture}.
	
	Under the peak complex-envelope convention, a plane wave with power $P$ and electric-field amplitude $E$ over the effective receiver aperture $A_{\mathrm{e}}$ satisfies $P=A_{\mathrm{e}}|E|^2/(2\eta_0)$, where $\eta_0 = 377\ \Omega$ is the free space intrinsic impedance. Therefore, the incident communication electric field at the $m$-th array element is
	\begin{equation}
		E_{\mathrm{com}}^{(m)}(t)
		=
		\sqrt{\frac{2\eta_0P_{\mathrm{tx}}}{A_{\mathrm{e}}}}
		[\mathbf{h}_{\mathrm{com}}]_m x_{\mathrm{com}}(t)
		\exp\left(j 2\pi f_c t\right),
		\label{eq:field_com}
	\end{equation}
	where $P_{\mathrm{tx}}$ is the transmit power, $x_{\mathrm{com}}(t)$ is the normalized complex baseband symbol stream satisfying $\mathbb{E}[|x_{\mathrm{com}}(t)|^2]=1$, and $\mathbf{h}_{\mathrm{com}}\in\mathbb{C}^{M\times 1}$ is the dimensionless RF channel including the angle-dependent ULA phase. The effective aperture is $A_{\mathrm{e}}=2\pi w_0^2+2\pi w_0\ell_{\rm cell}$, where $w_0$ is the optical beam waist and $\ell_{\rm cell}$ is the vapor-cell length along the optical axis \cite{gong2026rydbergTC}.
	
	For the sensing function, the base station (BS) transmits a probing waveform toward a static point target located at angle $\theta$, so that the Doppler shift due to target motion is neglected in the present angle-of-arrival (AoA) estimation model. The reflected echo arrives at the receiver array with round trip delay $\tau$ and carries the spatial phase signature required for AoA estimation. The sensing echo field at the $m$-th array element is expressed as~\cite{xiong2023fundamental, gong2025rydbergMT} 
	\begin{equation}
		E_{\mathrm{sens}}^{(m)}(t)
		=
		\sqrt{\frac{2\eta_0P_{\mathrm{echo}}}{A_{\mathrm{e}}}}
		x_s(t-\tau)\exp\left(j 2\pi f_s(t-\tau)\right)[\mathbf{a}(\theta)]_m,
		\label{eq:field_sens}
	\end{equation}
	where $P_{\mathrm{echo}}$ is the equivalent echo power incident on $A_{\mathrm{e}}$ at each array element, $x_s(t)$ is the normalized probing waveform, and $f_s$ is the sensing carrier frequency.
	The sensing spatial vector is given by $[\mathbf{a}(\theta)]_m=a_m(\theta)=e^{j(m-1)kd\sin\theta}$, where $k=2\pi/\lambda_s$, $\lambda_s$ is the sensing wavelength, and $d$ is the inter-element spacing.
	
	To facilitate frequency down conversion, the atomic array is driven by a multi tone LO field generated by a common external reference. The LO field is assumed to uniformly illuminate all vapor cells with synchronized phases; hence, it is spatially invariant across array elements, i.e., $E_{\mathrm{LO}}^{(m)}(t)=E_{\mathrm{LO}}(t), \forall m$. Then, the combined LO field ${E}_{\mathrm{LO}}(t)$ can be modeled as
	\begin{equation}
		{E}_{\mathrm{LO}}(t) = {E}_{\mathrm{LO},c} \exp(j 2\pi f_{\mathrm{LO},c} t) + {E}_{\mathrm{LO},s} \exp(j 2\pi f_{\mathrm{LO},s} t),
		\label{eq:field_lo}
	\end{equation}
	where $E_{\mathrm{LO},c}$ and $E_{\mathrm{LO},s}$ correspond to the amplitudes of the communication and sensing LOs, respectively. The total RF field is ${E}_{\mathrm{total}}^{(m)}(t)=E_{\mathrm{LO}}^{(m)}(t)+E_{\mathrm{com}}^{(m)}(t)+E_{\mathrm{sens}}^{(m)}(t)$.
	
	\begin{figure}[!t]
		\centering
		\includegraphics[width=3.5in]{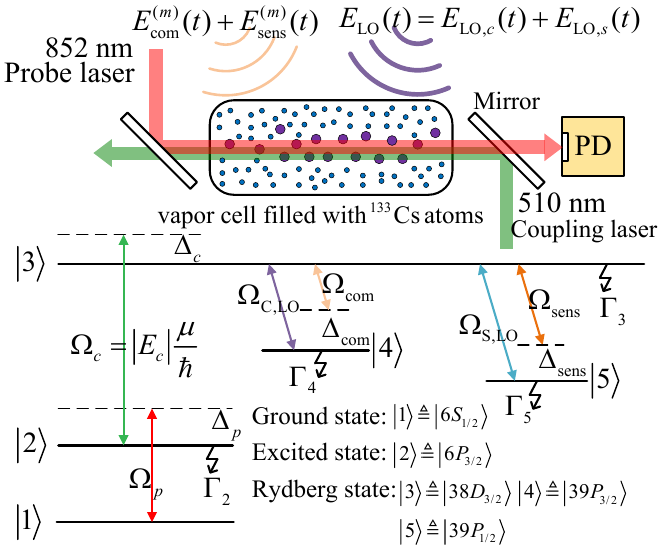}
		\caption{Illustration of the Rydberg atomic receiver structure and the energy level diagram of the Cs atom.}
		\label{fig:five_level}
	\end{figure}
	
	\subsection{Double-Sideband Quantum Response of MB-RAR}
	As shown in Fig.~\ref{fig:five_level}, we consider a cesium (Cs) vapor cell built on a five level ladder plus branch diagram \cite{simons2018fiber, zhu2025general}.
	The optical EIT backbone consists of $\left|1\right\rangle\triangleq\left|6S_{1/2}\right\rangle$, $\left|2\right\rangle\triangleq\left|6P_{3/2}\right\rangle$, and $\left|3\right\rangle\triangleq\left|38D_{3/2}\right\rangle$. The probe field with wavelength $\lambda_p=852$~nm drives $\left|1\right\rangle\leftrightarrow\left|2\right\rangle$ with Rabi frequency $\Omega_p$ and detuning $\Delta_p$, while the coupling field with wavelength $\lambda_c\approx510.5$~nm drives $\left|2\right\rangle\leftrightarrow\left|3\right\rangle$ with Rabi frequency $\Omega_c$ and detuning $\Delta_c$, establishing the EIT pathway.

	For Rydberg atoms in a room-temperature vapor, thermal atomic motion causes Doppler shifts of both the probe and coupling lasers, which modify the atomic density matrix. For the adopted counter-propagating optical geometry, let $v$ denote the atomic velocity component along the probe-beam direction. The corresponding optical detunings are
	\begin{equation}
	\Delta_p(v)=\Delta_p-k_pv,\qquad
	\Delta_c(v)=\Delta_c+k_cv,
	\label{eq:optical_doppler_detunings}
	\end{equation}
	where $k_p=2\pi/\lambda_p$ and $k_c=2\pi/\lambda_c$. The longitudinal velocity distribution is
	\begin{equation}
	f_T(v)=
	\frac{1}{\sqrt{2\pi}\sigma_v}
	\exp\left(-\frac{v^2}{2\sigma_v^2}\right),
	\qquad
	\sigma_v^2=\frac{k_BT}{m_{\mathrm{Cs}}},
	\label{eq:maxwell_boltzmann_velocity}
	\end{equation}
	where $k_B$ is the Boltzmann constant, $m_{\mathrm{Cs}}$ is the mass of a Cs atom, and $T$ is the operating temperature of the vapor cell.
	
	The branches of MB-RAR are formed by coupling the shared Rydberg state $\left|3\right\rangle$ to two resolvable upper Rydberg states, $\left|4\right\rangle\triangleq\left|39P_{3/2}\right\rangle$ and $\left|5\right\rangle\triangleq\left|39P_{1/2}\right\rangle$. The communication branch $\left|3\right\rangle\leftrightarrow\left|4\right\rangle$ has atomic transition frequency $f_{34}^{\mathrm{at}}=12.3576$~GHz, and the sensing branch $\left|3\right\rangle\leftrightarrow\left|5\right\rangle$ has atomic transition frequency $f_{35}^{\mathrm{at}}=17.1695$~GHz. The LO frequencies are tuned near the corresponding atomic transitions. The beat frequencies between the RF carriers and their corresponding LO tones, $f_{\mathrm{IF},\mathrm{com}}=|f_c-f_{\mathrm{LO},c}|$ and $f_{\mathrm{IF},\mathrm{sens}}=|f_s-f_{\mathrm{LO},s}|$, respectively, constitute the communication and sensing intermediate frequencies in atomic heterodyne reception.

For branch $b\in\{\mathrm{com},\mathrm{sens}\}$, the microwave Rabi frequency is decomposed into a strong LO bias and a weak signal perturbation,
\begin{equation}
\Omega_b(t)=\Omega_{b,\mathrm{LO}}+\delta\Omega_b(t),\quad
|\delta\Omega_b(t)|\ll|\Omega_{b,\mathrm{LO}}|.
\label{eq:branch_rabi_decomp}
\end{equation}

Throughout the atomic response model, $\delta$ denotes a small perturbation, while $\Delta_p$, $\Delta_c$, and $\Delta_b$ denote detunings. The two LO biases jointly establish one atomic operating point; $\delta\Omega_{\mathrm{com}}(t)$ and $\delta\Omega_{\mathrm{sens}}(t)$ are distinct inputs to that same ensemble. Under a total LO intensity budget, $|\Omega_{\mathrm{com},\mathrm{LO}}|^2=\alpha\bar\Omega_{\mathrm{LO}}^2$ and $|\Omega_{\mathrm{sens},\mathrm{LO}}|^2=(1-\alpha)\bar\Omega_{\mathrm{LO}}^2$, where $\alpha\in[0,1]$ is the communication LO power fraction.

	The above five level MB-RAR model applies when the communication and sensing carriers address two resolvable Rydberg microwave transitions. If the two carriers are closely spaced around the same Rydberg transition, they should be modeled as perturbations of a shared transition RAR. In that case, IF domain separation is possible only when the corresponding intermediate frequencies are distinguishable within the dynamic response bandwidth. If the two carriers also generate the same IF, additional time, code, pilot, or spatial separation is required at the system level.
	
	The interaction between the lasers, the RF fields, and the multi level atomic system is described by the Lindblad master equation \cite{zhu2025general, QS-16-23},
	\begin{equation}
		\frac{\partial\boldsymbol{\rho}}{\partial t}
		=
		\frac{1}{j\hbar}[\mathbf{H}(t),\boldsymbol{\rho}]
		+
		{\mathcal{D}}({\boldsymbol{\rho}}),
		\label{eq:lindblad_master}
	\end{equation}
	where $\boldsymbol{\rho}(t)$ is the density matrix and the Hamiltonian $\mathbf{H}(t)$ is obtained by substituting $\Omega_{\mathrm{com}}(t)$ and $\Omega_{\mathrm{sens}}(t)$ into \eqref{Hamiltonian}. ${\mathcal{D}}({\boldsymbol{\rho}}) = \sum_{\ell=2}^{5}{\mathcal{D}}[L_\ell](\boldsymbol{\rho})$, where for each collapse operator $L_\ell,\ \ell=2,3,4,5$, the corresponding Lindblad dissipator is
	
	\begin{equation}
		{\mathcal{D}}[L_\ell](\boldsymbol{\rho})
		=
		L_\ell\boldsymbol{\rho}L_\ell^\dagger
		-\frac{1}{2}
		\left(
		L_\ell^\dagger L_\ell\boldsymbol{\rho}
		+
		\boldsymbol{\rho}L_\ell^\dagger L_\ell
		\right).
		\label{eq:collapse}
	\end{equation}
	The collapse operators $L_\ell$ are defined as
	\begin{equation}
		L_\ell=
		\begin{cases}
		\sqrt{\Gamma_\ell}\,|\ell-1\rangle\langle\ell|, & \ell=2,3,\\[4pt]
		\sqrt{\Gamma_\ell}\,|3\rangle\langle\ell|, & \ell=4,5,
		\end{cases}
	\label{eq:collapse_operators}
	\end{equation}
	where $\Gamma_\ell(T)=1/\tau_\ell(T)$, $\ell=2,3,4,5$, are the effective population-decay rates of the corresponding energy levels. The state lifetimes are $\tau_2=\tau_{6P_{3/2}}$, $\tau_3=\tau_{38D_{3/2}}$, $\tau_4=\tau_{39P_{3/2}}$, and $\tau_5=\tau_{39P_{1/2}}$, obtained from the Alkali Rydberg Calculator (ARC) atomic model \cite{vsibalic2017arc}. The ground state $|1\rangle$ is treated as stable, and its decay rate is set to $\Gamma_1=0$.
	Substituting these collapse operators \eqref{eq:collapse_operators} into ${\mathcal{D}}({\boldsymbol{\rho}})$ gives the total dissipative contribution \eqref{Dissipative}.
		
	\begin{figure*}
		\centering
		\begin{align}
			\mathbf{H} = \frac{\hbar}{2}
			\begin{bmatrix}
				0 & \Omega_p & 0 & 0 & 0\\
				\Omega_p & -2\Delta_p & \Omega_c & 0 & 0\\
				0 & \Omega_c & -2(\Delta_p + \Delta_c) & \Omega_{\mathrm{com}}^* & \Omega_{\mathrm{sens}}^*\\
				0 & 0 & \Omega_{\mathrm{com}} & -2(\Delta_p + \Delta_c - \Delta_{\mathrm{com}}) & 0\\
				0 & 0 & \Omega_{\mathrm{sens}} & 0 & -2(\Delta_p + \Delta_c - \Delta_{\mathrm{sens}})
			\end{bmatrix}.
			\label{Hamiltonian}
		\end{align}
	\end{figure*}
	
	\begin{figure*}
		\centering
	\begin{equation}
			{\mathcal{D}}({\boldsymbol{\rho}}) = 
			\begin{pmatrix}
				\Gamma_2 \rho_{22} & 
				-\frac{\Gamma_2}{2} \rho_{12} & 
				-\frac{\Gamma_{3}}{2} \rho_{13} & 
				-\frac{\Gamma_{4}}{2} \rho_{14} & 
				-\frac{\Gamma_{5}}{2} \rho_{15} \\
				-\frac{\Gamma_2}{2} \rho_{21} & 
				\Gamma_3 \rho_{33} - \Gamma_2 \rho_{22} & 
				-\frac{\Gamma_2 + \Gamma_{3}}{2} \rho_{23} & 
				-\frac{\Gamma_2 + \Gamma_{4}}{2} \rho_{24} & 
				-\frac{\Gamma_2 + \Gamma_{5}}{2} \rho_{25} \\
				-\frac{\Gamma_{3}}{2} \rho_{31} & 
				-\frac{\Gamma_2 + \Gamma_{3}}{2} \rho_{32} & 
				\Gamma_5 \rho_{55} + \Gamma_4 \rho_{44} - \Gamma_3 \rho_{33} & 
				-\frac{\Gamma_{3} + \Gamma_{4}}{2} \rho_{34} & 
				-\frac{\Gamma_{3} + \Gamma_{5}}{2} \rho_{35} \\
				-\frac{\Gamma_{4}}{2} \rho_{41} & 
				-\frac{\Gamma_2 + \Gamma_{4}}{2} \rho_{42} & 
				-\frac{\Gamma_{3} + \Gamma_{4}}{2} \rho_{43} & 
				-\Gamma_4 \rho_{44} & 
				-\frac{\Gamma_4+\Gamma_5}{2}\rho_{45} \\
				-\frac{\Gamma_{5}}{2} \rho_{51} & 
				-\frac{\Gamma_2 + \Gamma_{5}}{2} \rho_{52} & 
				-\frac{\Gamma_{3} + \Gamma_{5}}{2} \rho_{53} & 
				-\frac{\Gamma_4+\Gamma_5}{2}\rho_{54} & 
				-\Gamma_5 \rho_{55}
			\end{pmatrix}.
			\label{Dissipative}
		\end{equation}
		\hrulefill
	\end{figure*}

For each velocity class, let $\mathbf x(t;v)=\mathrm{vec}(\boldsymbol\rho(t;v))\in\mathbb C^{25}$. With both LO biases retained and both weak inputs zero, $\dot{\mathbf x}=\mathcal L_0(v;\Omega_c,\alpha)\mathbf x$.
The Liouvillian $\mathcal L_0$ includes the common Hamiltonian and dissipator, with the velocity-dependent detunings in \eqref{eq:optical_doppler_detunings}. Its trace-normalized steady state satisfies
\begin{equation}
\mathcal L_0(v)\bar{\mathbf x}(v)=0,\quad \mathrm{Tr}(\bar{\boldsymbol\rho}(v))=\mathbf t^{\mathrm T}\bar{\mathbf x}(v)=1,
\label{eq:liouvillian_working_point}
\end{equation}
where $\bar{\boldsymbol\rho}(v)$ is the steady-state density matrix with $\bar{\mathbf x}(v)=\mathrm{vec}(\bar{\boldsymbol\rho}(v))$, and $\mathbf t\triangleq\mathrm{vec}(\mathbf I_5)$ is the trace-selection vector satisfying $\mathbf t^{\mathrm T}\mathrm{vec}(\mathbf X)=\mathrm{Tr}(\mathbf X)$ for any $5\times5$ matrix $\mathbf X$.
Both RF branches act on this same density matrix. Accordingly, the optical coherence is written as $\rho_{21}(t;v)=\bar\rho_{21}(v)+\delta\rho_{21}(t;v)$. Hence, the shared steady-state coherence is
\begin{equation}
\bar\rho_{21}^{\mathrm D}
=\int_{-\infty}^{\infty}\bar\rho_{21}(v)f_T(v)\,\mathrm dv,
\label{eq:rho21_doppler_average}
\end{equation}
where $\{\cdot\}^{\mathrm D}$ denotes Doppler averaging.
Let $u_b$ denote the Rydberg state, with $u_{\mathrm{com}}=4$ and $u_{\mathrm{sens}}=5$. Then
\begin{equation}
\delta\mathbf H_b(t)
=\frac{\hbar}{2}\left[
\delta\Omega_b(t)|u_b\rangle\langle3|
+\delta\Omega_b^*(t)|3\rangle\langle u_b|
\right].
\label{eq:branch_rf_perturbation_unified}
\end{equation}
The total Hamiltonian perturbation is $\sum_b\delta\mathbf H_b$. Writing $\mathbf x=\bar{\mathbf x}+\delta\mathbf x$ and retaining first-order terms gives
\begin{equation}
\begin{aligned}
\frac{d\delta\mathbf x(t;v)}{dt}
&=\mathcal L_0(v)\delta\mathbf x(t;v)+\sum_b\bigl[\mathcal L_b^{\mathrm u}(v)\bar{\mathbf x}(v)\delta\Omega_b(t)\\
&\hspace{4.5em}+\mathcal L_b^{\mathrm l}(v)\bar{\mathbf x}(v)\delta\Omega_b^*(t)\bigr],
\end{aligned}
\label{eq:liouvillian_small_signal}
\end{equation}
where all branch sums run over $b\in\{\mathrm{com},\mathrm{sens}\}$, and the two source superoperators are
\begin{equation}
\begin{aligned}
\mathcal L_b^{\mathrm u}(v)\mathbf x
&\triangleq\mathrm{vec}\!\left\{-\frac j2[|u_b\rangle\langle3|,\boldsymbol\rho]\right\},\\
\mathcal L_b^{\mathrm l}(v)\mathbf x
&\triangleq\mathrm{vec}\!\left\{-\frac j2[|3\rangle\langle u_b|,\boldsymbol\rho]\right\},
\end{aligned}
\label{eq:branch_perturbation_superoperators}
\end{equation}
where the superscripts $\{\cdot\}^{\mathrm u}$ and $\{\cdot\}^{\mathrm l}$ label the source channels associated with the upper-sideband input and the conjugated lower-sideband input in the positive-frequency response below. 

Let $\mathbf c_{21}$ select the optical coherence, i.e., $\mathbf c_{21}^{\mathrm T}\mathbf x=\rho_{21}$. Fourier transformation of \eqref{eq:liouvillian_small_signal} yields
\begin{equation}
\begin{aligned}
&\delta\mathbf x(\omega;v)
=(j\omega\mathbf I-\mathcal L_0(v))^{-1}\\
&\hspace{1em}\times\sum_b\bigl[\mathcal L_b^{\mathrm u}(v)\bar{\mathbf x}(v)\delta\Omega_b(\omega)+\mathcal L_b^{\mathrm l}(v)\bar{\mathbf x}(v)\delta\Omega_b^*(-\omega)\bigr].
\end{aligned}
\label{eq:rho21_resolvent}
\end{equation}
Here $\mathcal F\{\delta\Omega_b^*(t)\}(\omega)=\delta\Omega_b^*(-\omega)$. For $\omega>0$, the two inputs correspond to frequencies $f_{\mathrm{LO},b}+\omega/(2\pi)$ and $f_{\mathrm{LO},b}-\omega/(2\pi)$, respectively. Define their single-velocity coherence-response kernels as
\begin{equation}
\begin{aligned}
F_b^{\mathrm u}(\omega;v)
&\triangleq\mathbf c_{21}^{\mathrm T}(j\omega\mathbf I-\mathcal L_0(v))^{-1}
\mathcal L_b^{\mathrm u}(v)\bar{\mathbf x}(v),\\
F_b^{\mathrm l}(\omega;v)
&\triangleq\mathbf c_{21}^{\mathrm T}(j\omega\mathbf I-\mathcal L_0(v))^{-1}
\mathcal L_b^{\mathrm l}(v)\bar{\mathbf x}(v).
\end{aligned}
\label{eq:branch_dynamic_response}
\end{equation}
Both kernels are evaluated at the same output frequency $\omega$; the negative-frequency argument belongs to the conjugated input. At other signed frequencies, \eqref{eq:rho21_resolvent} defines the continuation of these two source responses. The resolvent acts on the trace-zero perturbation subspace; at $\omega=0$, the trace-constrained system in Appendix~\ref{app:proof_cross_damping} defines the response. The total optical coherence perturbation is
\begin{equation}
\delta\rho_{21}(\omega;v)\!=\! \sum_b\bigl[F_b^{\mathrm u}(\omega;v)\delta\Omega_b(\omega)\!+\!F_b^{\mathrm l}(\omega;v)\delta\Omega_b^*(-\omega)\bigr].
\label{eq:two_branch_response}
\end{equation}
Their Doppler averages are
\begin{equation}
\begin{aligned}
F_b^{\mathrm{u,D}}(\omega)
&\triangleq\int_{-\infty}^{\infty}F_b^{\mathrm u}(\omega;v)f_T(v)\,\mathrm dv,\\
F_b^{\mathrm{l,D}}(\omega)
&\triangleq\int_{-\infty}^{\infty}F_b^{\mathrm l}(\omega;v)f_T(v)\,\mathrm dv.
\end{aligned}
\label{eq:branch_sideband_response_doppler}
\end{equation}

Write $\delta\Omega_b(t)=\delta\Omega_{b,\mathrm I}(t)+j\delta\Omega_{b,\mathrm Q}(t)$ with real-valued I/Q components. Then $\delta\Omega_b^*(-\omega)=\delta\Omega_{b,\mathrm I}(\omega)-j\delta\Omega_{b,\mathrm Q}(\omega)$. The corresponding coherence responses are
\begin{equation}
\begin{aligned}
G_{b,\mathrm I}^{\mathrm D}(\omega)
&\triangleq F_b^{\mathrm{u,D}}(\omega)+F_b^{\mathrm{l,D}}(\omega),\\
G_{b,\mathrm Q}^{\mathrm D}(\omega)
&\triangleq j\left[F_b^{\mathrm{u,D}}(\omega)-F_b^{\mathrm{l,D}}(\omega)\right].
\end{aligned}
\label{eq:branch_iq_response}
\end{equation}
The functions $G_{b,\mathrm I}^{\mathrm D}$ and $G_{b,\mathrm Q}^{\mathrm D}$ map the two RF coordinates to the same optical coherence. The total thermal response is
\begin{equation}
\delta\rho_{21}^{\mathrm D}(\omega)
=\sum_b\bigl[G_{b,\mathrm I}^{\mathrm D}(\omega)\delta\Omega_{b,\mathrm I}(\omega)+G_{b,\mathrm Q}^{\mathrm D}(\omega)\delta\Omega_{b,\mathrm Q}(\omega)\bigr].
\label{eq:branch_iq_coherence}
\end{equation}

\begin{figure}[!t]
\centering
\subfigure[Resonant.\label{fig:sideband_response_resonant}]{%
\includegraphics[width=0.525\columnwidth]{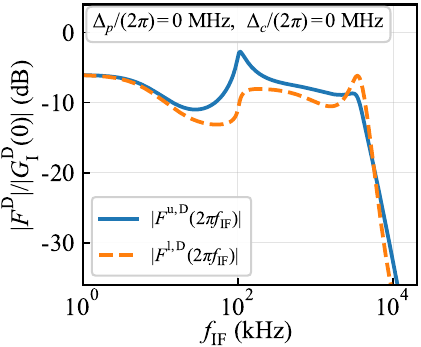}}\hfill
\subfigure[Detuned.\label{fig:sideband_response_detuned}]{%
\includegraphics[width=0.475\columnwidth]{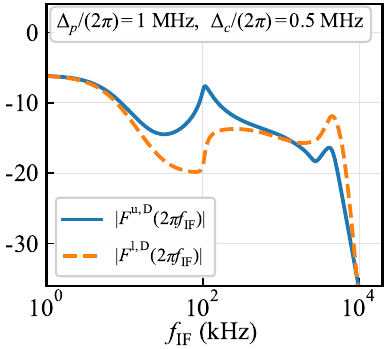}}
\caption{Normalized Doppler-averaged sideband coherence responses: (a) $\Delta_p=\Delta_c=0$; (b) $\Delta_p/(2\pi)=1$~MHz and $\Delta_c/(2\pi)=0.5$~MHz.}
\label{fig:sideband_response}
\end{figure}

The two sideband responses need not be equal, particularly under optical detuning. Figs.~\ref{fig:sideband_response}\subref{fig:sideband_response_resonant} and~\ref{fig:sideband_response}\subref{fig:sideband_response_detuned} compare the communication-sideband kernels under resonant and detuned optical excitation, respectively, using the same $|G_{\mathrm I}^{\mathrm D}(0)|$ normalization within each panel. The two magnitudes nearly coincide at low IF but separate as the IF increases. This separation arises because the direct and conjugate RF perturbations drive different coherence pathways of the LO-biased Liouvillian, whose complex weights depend on frequency. Moreover, optical detuning changes the relative sideband magnitudes and the spectral features by shifting the velocity-dependent resonances. However, resonant excitation alone does not guarantee identical sideband responses over the receive band, and both kernels are retained in \eqref{eq:branch_iq_response}.

Let $\delta\Omega_{b,q}(t)=a_{b,q}\cos(\omega t)$ with $a_{b,q}\in\mathbb R$, $q\in\{\mathrm I,\mathrm Q\}$, and a common phase reference. Each response column is measured by exciting one input at a time, with all other weak inputs zero and both LO biases retained. The shared complex coherence harmonic is defined as
\begin{equation}
	\widehat{\delta\rho}_{21}^{\mathrm D}
	=\lim_{T_{\mathrm{obs}}\to\infty}\frac{2}{T_{\mathrm{obs}}}\int_{t_0}^{t_0+T_{\mathrm{obs}}}
	\delta\rho_{21}^{\mathrm D}(t)e^{-j\omega t}\,\mathrm dt,
\end{equation}
where $T_{\mathrm{obs}}$ is the observation time, taken in simulation as an integer number of modulation periods after transients decay. In the small-signal limit, the real and imaginary parts obey
\begin{equation}
\begin{bmatrix}
\operatorname{Re}\{\widehat{\delta\rho}_{21}^{\mathrm D}\}\\
\operatorname{Im}\{\widehat{\delta\rho}_{21}^{\mathrm D}\}
\end{bmatrix}
=\sum_b
\begin{bmatrix}
\operatorname{Re}\{G_{b,\mathrm I}^{\mathrm D}\}&\operatorname{Re}\{G_{b,\mathrm Q}^{\mathrm D}\}\\
\operatorname{Im}\{G_{b,\mathrm I}^{\mathrm D}\}&\operatorname{Im}\{G_{b,\mathrm Q}^{\mathrm D}\}
\end{bmatrix}
\begin{bmatrix}a_{b,\mathrm I}\\a_{b,\mathrm Q}\end{bmatrix},
\label{eq:branch_real_imag_response}
\end{equation}
where each $2\times2$ block maps one branch's real calibration amplitudes to the common coherence harmonic; two branches give a $2\times4$ calibration map when the blocks are concatenated. The physical absorption response is given as
\begin{equation}
	\mathcal M\{X\}(\omega)
	=\frac{X(\omega)-X^*(-\omega)}{2j}.
\end{equation}
At DC, $\mathcal M\{X\}(0)=\operatorname{Im}X(0)$. Applied to the coherence kernels, it gives
\begin{equation}
		\mathcal F\!\left \{\operatorname{Im}\delta\rho_{21}^{\mathrm D}(t)\right\}(\omega)
		\!=\! \sum_b\!\sum_{q\in\{\mathrm I,\mathrm Q\}}\!
		\mathcal M\!\left\{G_{b,q}^{\mathrm D}\right\}\!(\omega)
		\delta\Omega_{b,q}(\omega).
\end{equation}

\begin{figure}[!t]
\centering
\includegraphics[width=3.3in]{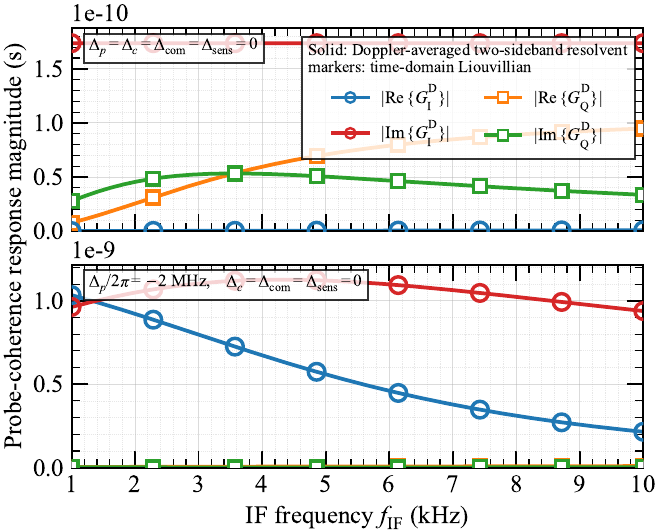}
\caption{Real and imaginary parts of the I/Q coherence-response kernels. Solid: Doppler-averaged double-sideband resolvent; markers: time-domain Liouvillian cosine calibration.}
\label{fig:quadrature_response_validation}
\end{figure}

Fig.~\ref{fig:quadrature_response_validation} compares the real and imaginary parts of the resolvent kernels with coherence harmonics extracted from independent I-only and Q-only time-domain Liouvillian probes. The markers follow the corresponding curves across the tested IF range, verifying that the two-source linearization reproduces the driven coherence response. The component magnitudes are unequal because the I and Q perturbations couple differently to the biased optical coherence. When, in the near-resonance regime, every excited branch has its other three entries small relative to $|\operatorname{Im}G_{b,\mathrm I}^{\mathrm D}|$, the calibration relation reduces to
\begin{equation}
\begin{bmatrix}
\operatorname{Re}\{\widehat{\delta\rho}_{21}^{\mathrm D}\}\\
\operatorname{Im}\{\widehat{\delta\rho}_{21}^{\mathrm D}\}
\end{bmatrix}
\approx\sum_b
\begin{bmatrix}
0&0\\
\operatorname{Im}\{G_{b,\mathrm I}^{\mathrm D}\}&0
\end{bmatrix}
\begin{bmatrix}a_{b,\mathrm I}\\a_{b,\mathrm Q}\end{bmatrix}.
\label{eq:branch_near_resonance_approx}
\end{equation}

The single-velocity I-input kernel is $G_{b,\mathrm I}(\omega;v)=F_b^{\mathrm u}(\omega;v)+F_b^{\mathrm l}(\omega;v)$, whose Doppler average follows from \eqref{eq:branch_sideband_response_doppler} and \eqref{eq:branch_iq_response}.
The scalar performance model uses $\mathcal M\{G_{b,\mathrm I}^{\mathrm D}\}$ in bands where the projected Q-input contribution is negligible. 

\begin{figure}[!t]
	\centering
	\includegraphics[width=3.4in]{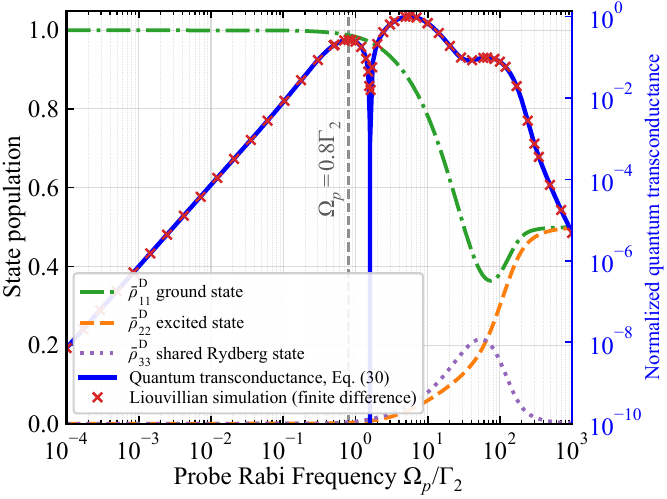}
	\caption{Full five-level Liouvillian response under varying probe Rabi frequency. The left axis shows the state populations. The right logarithmic axis shows the normalized quantum transconductance.}
	\label{fig:probe_response_kernel}
\end{figure}

Fig.~\ref{fig:probe_response_kernel} shows the quantum transconductance and atomic state populations versus probe Rabi frequency. Agreement with central finite differences of the nonlinear steady-state photocurrent validates the Liouvillian RF small-signal response. At low probe strength, most atoms occupy the ground state, and increasing probe power enhances optical readout and transmission. As the probe strengthens, population is transferred to the excited states, modifying the absorption sensitivity of the participating velocity classes. The resulting reduction in net atomic sensitivity eventually offsets the readout enhancement, producing a local transconductance maximum near $\Omega_p=0.8\Gamma_2$. Further excitation weakens the negative response of near-resonant atoms relative to opposite-sign contributions, leading to nearly complete cancellation in the Doppler average at $\Omega_p\approx1.59\Gamma_2$. Beyond this point, the opposite-sign contribution dominates and the conversion recovers, reaching a larger maximum near $\Omega_p=5.57\Gamma_2$ before declining as atomic sensitivity is further suppressed. Meanwhile, the ground-state population decreases, whereas the shared Rydberg-state population initially increases and subsequently falls. In the strong-probe limit, the populations of $|1\rangle$ and $|2\rangle$ approach one half, indicating saturation of the probe transition and diminishing occupation of the Rydberg manifold. Thus, the higher transconductance maximum comes with stronger atomic excitation and greater probe-power requirements. Subsequent simulations select $\Omega_p=0.8\Gamma_2$ to retain appreciable conversion with moderate probe power and approximately $98.8\%$ ensemble-averaged ground-state occupation.

\subsection{Quantum Transconductance}

The shared optical susceptibility follows from the common coherence as
\begin{equation}
\begin{aligned}
\bar\chi^{\mathrm D}
&=-\frac{N|\mu_{21}|^2}{\epsilon_0\hbar}
\frac{\bar\rho_{21}^{\mathrm D}}{\Omega_p/2},\\
\delta\chi^{\mathrm D}(\omega)
&=-\frac{N|\mu_{21}|^2}{\epsilon_0\hbar}
\frac{\delta\rho_{21}^{\mathrm D}(\omega)}{\Omega_p/2},
\end{aligned}
\label{eq:chi_dynamic}
\end{equation}
where $N$ is the atomic number density and $\mu_{21}$ is the optical transition dipole moment. Equation~\eqref{eq:branch_iq_coherence} supplies the contributions of all weak inputs to $\delta\chi^{\mathrm D}$. The minus signs in \eqref{eq:chi_dynamic} follow the positive off-diagonal Rabi convention in \eqref{Hamiltonian}: in the resonant weak-probe two-level limit, $\bar\rho_{21}\simeq-j\Omega_p/\Gamma_2$ for real positive $\Omega_p$, giving $\operatorname{Im}\bar\chi>0$. The steady-state photocurrent is
\begin{equation}
I_{\mathrm{DC}}^{\mathrm D}
=I_{\mathrm{in}}\exp\!\left\{-k_p\ell_{\rm cell}\operatorname{Im}\bar\chi^{\mathrm D}\right\},
\label{eq:IDC_liouvillian}
\end{equation}
where $I_{\mathrm{in}}$ is the input photocurrent reference. This Beer--Lambert readout \cite{jing2020atomic} is evaluated after averaging the coherence over atomic velocities.

At array element $m$, the field-to-Rabi conversion is $\delta\Omega_b^{(m)}(t)=(\mu_b/\hbar)E_{\mathrm{in},b}^{(m)}(t)$, where $\mu_b$ is the microwave transition dipole moment. The branch field envelopes follow \eqref{eq:field_com} and \eqref{eq:field_sens}, expressed relative to their respective LOs.

The two inputs perturb a common photocurrent $i_{\mathrm{out}}$. For the scalar I-input readout used below, let $E_{\mathrm{in},b,\mathrm I}$ be the in-phase branch field envelope. At its operating angular IF $\omega_b=2\pi f_{\mathrm{IF},b}$, define the nonnegative transconductance magnitude
\begin{equation}
\begin{aligned}
g_b(\Omega_c,\alpha;\omega_b)
&\triangleq\left|
\frac{\partial i_{\mathrm{out}}(\omega_b)}
{\ell_{\rm cell}\,\partial E_{\mathrm{in},b,\mathrm I}(\omega_b)}
\right|\\
&=I_{\mathrm{DC}}^{\mathrm D}
\frac{2k_pN|\mu_{21}|^2}{\epsilon_0\hbar|\Omega_p|}
\frac{|\mu_b|}{\hbar}
\left|\mathcal M\!\left\{G_{b,\mathrm I}^{\mathrm D}\right\}(\omega_b)\right|.
\end{aligned}
\label{eq:g_def}
\end{equation}
Thus, $g_b$ identifies the input-to-readout conversion, not a separate density matrix or detector.

\subsection{Noise Model}

		For branch $b\in\{\mathrm{com},\mathrm{sens}\}$, the processed readout is corrupted by three physically distinct noise sources: photon shot noise (PSN) from optical detection, quantum projection noise (QPN) intrinsic to the atomic ensemble, and thermal noise (TN) from the readout electronics. Let $I_{\mathrm{PD}}=I_{\mathrm{DC}}^{\mathrm D}=\mathcal R_{\mathrm{PD}}P_{\mathrm{opt}}$ denote the steady-state photocurrent and $B_{\mathrm{eq},b}$ the equivalent noise bandwidth of branch $b$. Since the three sources arise from independent physical mechanisms, they are modeled as mutually uncorrelated, wide-sense stationary processes. Over $B_{\mathrm{eq},b}$, each is assumed to have a locally flat one-sided power spectral density (PSD) $S$, referred to the same output port across the load $R_{\mathrm{L}}$.
	
	\begin{itemize}
		\item[$\bullet$]\textbf{Photon Shot Noise (PSN):}
		PSN originates from the discrete, Poisson-distributed arrival of photoelectrons at the photodetector (PD), so that the photocurrent fluctuates around its mean $I_{\mathrm{PD}}$ with a white current PSD $2qI_{\mathrm{PD}}$ in $\mathrm{A^2/Hz}$, where $q$ is the elementary charge. Background light and dark current are neglected, since the probe beam dominates the detected optical power. Dissipated in $R_{\mathrm{L}}$, this current fluctuation yields the output PSD $S_{\mathrm{PSN}}=2qI_{\mathrm{PD}}R_{\mathrm{L}}$, and hence
		\begin{equation}
			\sigma_{\mathrm{PSN},b}^2
			=2qI_{\mathrm{PD}}R_{\mathrm{L}}B_{\mathrm{eq},b}
			=2q\mathcal R_{\mathrm{PD}}P_{\mathrm{opt}}R_{\mathrm{L}}B_{\mathrm{eq},b},
			\label{eq:psn_equiv}
		\end{equation}
		which shows that the PSN variance grows linearly with the detected probe power.

		\item[$\bullet$]\textbf{Quantum Projection Noise (QPN):}	
		For an uncorrelated atomic ensemble, we adopt the standard-quantum-limit field-sensitivity model \cite{gong2026rydbergTC}
		
		\begin{equation}
			\mathcal{E}_{\mathrm{QPN},b}
			=\frac{\hbar}{|\mu_b|\sqrt{N_{\mathrm{eff},b}T_{2,b}}},
			\quad
			\bigl[\frac{\mathrm{V/m}}{\sqrt{\mathrm{Hz}}}\bigr],
			\label{eq:qpn_equiv}
		\end{equation}
		where $N_{\mathrm{eff},b}$ is the effective participating atom number, and $T_{2,b}$ is the microwave coherence time. Interpreting $\mathcal E_{\mathrm{QPN},b}^2$ as a one-sided mean-square field-noise PSD, the conversion $|\delta i_{\mathrm{out},b}|=\ell_{\rm cell}g_b|\delta E_b|$ gives
		
		\begin{equation}
		\begin{aligned}
			S_{\mathrm{QPN},b}^{\mathrm{out}}
			&=R_{\mathrm{L}}\ell_{\rm cell}^{2}g_b^{2}\mathcal{E}_{\mathrm{QPN},b}^{2},\\
			\sigma_{\mathrm{QPN},b}^{2}
			&=\frac{R_{\mathrm{L}}\ell_{\rm cell}^{2}g_b^{2}\hbar^{2}}
			{|\mu_b|^{2}N_{\mathrm{eff},b}T_{2,b}}B_{\mathrm{eq},b}.
		\end{aligned}
			\label{eq:qpn_psd}
		\end{equation}

		\item[$\bullet$]\textbf{Thermal Noise (TN):}
		TN comprises the Johnson--Nyquist noise of the load resistor and the excess noise added by the post-detection amplifier chain. Assuming that the readout electronics operate at the same temperature $T$ as the vapor cell and referring both contributions to the output port through the amplifier noise factor $F_{\mathrm{amp}}$, which absorbs the chain gain, gives the available noise PSD $S_{\mathrm{TN}}=k_BTF_{\mathrm{amp}}$. Thus,
		\begin{equation}
			\sigma_{\mathrm{TN},b}^2
			=k_BTF_{\mathrm{amp}}B_{\mathrm{eq},b}.
			\label{eq:tn_equiv}
		\end{equation}
		Unlike PSN and QPN, $\sigma_{\mathrm{TN},b}^2$ is independent of the optical power and of the LO allocation. It therefore sets a fixed electronic noise floor that dominates at low probe power.
	\end{itemize}
	
	Because the three contributions in \eqref{eq:psn_equiv}, \eqref{eq:qpn_psd}, and \eqref{eq:tn_equiv} are mutually uncorrelated and referred to the same output port, their variances add. After photodetection and band-limited baseband processing, the residual noise aggregates many independent microscopic fluctuations. By the central limit theorem, it is well approximated as circularly symmetric complex Gaussian
	\begin{equation}
		\tilde n_b\sim\mathcal{CN}\bigl(0,\sigma_{n,b}^2\bigr),\quad
		\sigma_{n,b}^2
		=\sigma_{\mathrm{PSN},b}^2+\sigma_{\mathrm{QPN},b}^2+\sigma_{\mathrm{TN},b}^2,
		\label{eq:total_noise_variance}
	\end{equation}
	where this Gaussian model applies to the processed baseband residual.

	\section{Quantum Tradeoff Analysis}
	\label{sec:tradeoffs}
	This section reveals two intrinsic tradeoffs in RAR-ISAC: the coupling-laser tradeoff and IBGI.
	\begin{figure}[!t]
		\centering
		\includegraphics[width=\columnwidth]{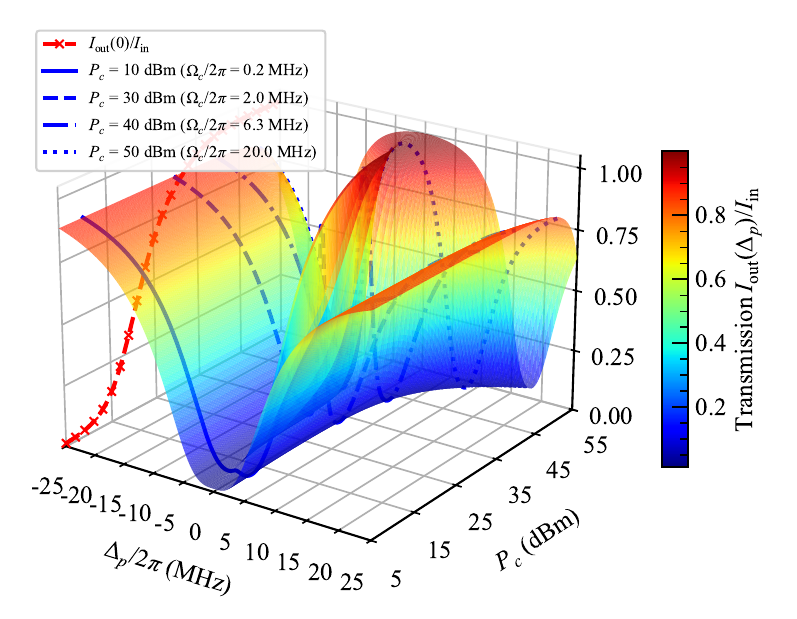}
		\caption{Steady-state probe-transmission spectra versus detuning for different coupling powers $P_c$.}
		\label{fig:EIT}
	\end{figure}
	\subsection{Coupling-Laser Tradeoff and Effective Response Bandwidth}
	\label{sec:dynamic_bandwidth}
	The recoverable communication bandwidth is determined by the frequency response from a time-varying RF perturbation to the post-detection optical readout. The incident communication signal induces a weak time-varying perturbation of the microwave Rabi frequency, which is converted into a baseband photocurrent through the Doppler-averaged small-signal atomic response $G_{\mathrm{com},\mathrm I}^{\mathrm D}(\omega)$.

	Fig.~\ref{fig:EIT} compares the steady-state probe-transmission spectra for different coupling powers. The transmission features widen and their separation grows as the coupling power increases. The coupling field first changes the interference between optical excitation pathways that produces EIT; at stronger coupling, it dresses the coupled levels and leads to resolved AT structure. The resulting peak separation follows the coupling strength. These features describe a static detuning scan; the communication bandwidth additionally depends on how rapidly a weak RF perturbation is converted into the detected optical signal.

For the near-resonant in-phase absorption readout defined by $\mathcal M$ in Section~II-B, the normalized dynamic RF-to-optical transfer function, with $f$ in Hz, is
	\begin{equation}
		\mathcal{H}_{\mathrm{com}}(f;\Omega_c,\alpha)
		\triangleq
		\frac{
		\mathcal M\!\left\{G_{\mathrm{com},\mathrm I}^{\mathrm D}\right\}(2\pi f;\Omega_c,\alpha)
		}{
		\left|\mathcal M\!\left\{G_{\mathrm{com},\mathrm I}^{\mathrm D}\right\}(0;\Omega_c,\alpha)\right|}.
		\label{eq:normalized_dynamic_response}
	\end{equation}
	The effective bandwidth $B_{\mathrm{com}}(\Omega_c,\alpha)$ is defined as the $-3$~dB bandwidth of the dynamic response in \eqref{eq:normalized_dynamic_response}, i.e., the smallest frequency $f$ at which $|\mathcal{H}_{\mathrm{com}}(f;\Omega_c,\alpha)|=1/\sqrt{2}$. It is evaluated numerically from the Doppler-averaged response over $0\le f\le100$~MHz at each operating point.

	\begin{figure}[!t]
		\centering
		\includegraphics[width=2.8in]{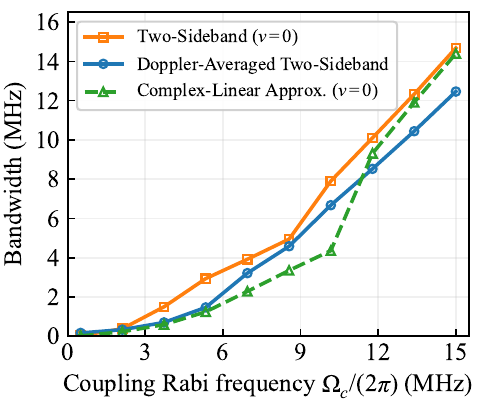}
		\caption{Effective bandwidth of the dynamic response versus coupling Rabi frequency under resonant excitation. The Doppler-averaged double-sideband curve gives $B_{\mathrm{com}}$ at $T=300$~K.}
		\label{fig:bandwidth_vs_coupling}
	\end{figure}

	Fig.~\ref{fig:bandwidth_vs_coupling} compares the effective dynamic bandwidths of the Doppler-averaged and zero-velocity double-sideband responses with the zero-velocity complex-linear reference. The two zero-velocity curves also differ: the complex-linear reference uses $|\operatorname{Im}G_{\mathrm{com},\mathrm I}(2\pi f;0)|$, whereas the physical absorption projection combines the positive and negative frequency responses before taking the magnitude. The bandwidth increases over the displayed coupling sweep because the stronger coupling field changes the coherent evolution rates of the common EIT channel, allowing the readout to follow faster RF perturbations. Doppler averaging combines complex responses with velocity-dependent detunings and reshapes the normalized spectrum.
	
\subsection{Inter Band Gain Inhibition}
	\label{sec:ibgi}
	\label{sec:ibgi_numerical}
	
	Both RF branches couple to the shared Rydberg state $|3\rangle$. Their LO biases therefore jointly determine the atomic steady state and the RF-to-optical response. We refer to the reduction in one branch's transconductance caused by strengthening the other LO bias while holding the own-branch bias fixed as IBGI.

	Let $U_b\triangleq|\Omega_{b,\mathrm{LO}}|$ denote the LO Rabi amplitude of branch $b\in\{\mathrm{com},\mathrm{sens}\}$. Under the total LO budget, $U_{\mathrm{com}}=\bar\Omega_{\mathrm{LO}}\sqrt{\alpha}$ and $U_{\mathrm{sens}}=\bar\Omega_{\mathrm{LO}}\sqrt{1-\alpha}$. The LO contributions to the shared-state coherence denominator are written as
	\begin{equation}
		\Gamma_{3,\mathrm{eff}}(\alpha)
		=\frac{\Gamma_3}{2}
		+\Gamma_{\mathrm{inj}}^{\mathrm{com}}(\alpha)
		+\Gamma_{\mathrm{inj}}^{\mathrm{sens}}(1-\alpha),
		\label{eq:gamma_total_inj}
	\end{equation}
	where
	\begin{equation}
		\begin{aligned}
		\Gamma_{\mathrm{inj}}^{\mathrm{com}}(\alpha)
		&=\frac{\alpha\bar\Omega_{\mathrm{LO}}^2\Gamma_4}
		{8(\Gamma_4^2/4+\Delta_{\mathrm{com}}^2)},\\
		\Gamma_{\mathrm{inj}}^{\mathrm{sens}}(1-\alpha)
		&=\frac{(1-\alpha)\bar\Omega_{\mathrm{LO}}^2\Gamma_5}
		{8(\Gamma_5^2/4+\Delta_{\mathrm{sens}}^2)}.
		\end{aligned}
		\label{eq:gamma_inj_terms}
	\end{equation}
	Increasing $\alpha$ strengthens the communication contribution and reduces the sensing contribution. To separate the explicit LO amplitude from this response dependence, define $\mathcal A_b\triangleq g_b/U_b$, the transconductance per unit LO Rabi amplitude. For fixed $\Omega_c$, detunings, LO phases, and branch IFs, let
	\begin{equation}
		\eta_{bb'}\triangleq
		\frac{U_{b'}}{\mathcal A_b}
		\frac{\partial\mathcal A_b}{\partial U_{b'}},
		\qquad b,b'\in\{\mathrm{com},\mathrm{sens}\},
		\label{eq:branch_gain_elasticities}
	\end{equation}
	where each partial derivative holds the other LO amplitude fixed. The dimensionless coefficients $\eta_{bb}$ and $\eta_{bb'}$, $b'\ne b$, quantify the own-branch and cross-branch changes in the response per unit LO amplitude, respectively.

	\begin{lemma}[LO Allocation and Branch Gain]
		\label{lem:cross_damping}
		At a differentiable operating point with $0<\alpha<1$ and positive branch transconductances, the communication and sensing gains satisfy
		\begin{equation}
			\frac{\partial_\alpha g_{\mathrm{com}}}{g_{\mathrm{com}}}
			=\frac{1}{2\alpha}
			+\frac{\eta_{\mathrm{com},\mathrm{com}}}{2\alpha}
			-\frac{\eta_{\mathrm{com},\mathrm{sens}}}{2(1-\alpha)},
			\label{eq:communication_gain_competition}
		\end{equation}
		\begin{equation}
			\frac{\partial_\alpha g_{\mathrm{sens}}}{g_{\mathrm{sens}}}
			=-\frac{1}{2(1-\alpha)}
			-\frac{\eta_{\mathrm{sens},\mathrm{sens}}}{2(1-\alpha)}
			+\frac{\eta_{\mathrm{sens},\mathrm{com}}}{2\alpha}.
			\label{eq:cross_damping}
		\end{equation}
	\end{lemma}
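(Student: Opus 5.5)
This is a chain-rule computation. The plan is to write each gain as $g_b=U_b\mathcal A_b(U_{\mathrm{com}},U_{\mathrm{sens}})$, with $\mathcal A_b$ viewed as a function of the two LO Rabi amplitudes while $\Omega_c$, the detunings, the LO phases and the branch IFs are held fixed. The allocation $\alpha$ then enters only through $U_{\mathrm{com}}=\bar\Omega_{\mathrm{LO}}\sqrt{\alpha}$ and $U_{\mathrm{sens}}=\bar\Omega_{\mathrm{LO}}\sqrt{1-\alpha}$.

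\textbf{Well-posedness.} On $0<\alpha<1$ both amplitudes are strictly positive. By hypothesis $g_b>0$, so $\mathcal A_b>0$, and both $\log g_b$ and the elasticities $\eta_{bb'}$ in \eqref{eq:branch_gain_elasticities} are defined. Differentiability of $\mathcal A_b$ in $(U_{\mathrm{com}},U_{\mathrm{sens}})$ at the operating point is the assumption of the lemma. Physically it reflects that the trace-constrained steady state of $\mathcal L_0$, and hence the resolvent kernels in \eqref{eq:branch_dynamic_response}, depend smoothly on the LO Rabi frequencies wherever the linear system stays nonsingular. Taking absolute values in \eqref{eq:g_def} is harmless because $g_b>0$.

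\textbf{Logarithmic derivative.} Take logs:
\[
\log g_b=\log U_b+\log\mathcal A_b(U_{\mathrm{com}},U_{\mathrm{sens}}).
\]
Differentiating the amplitudes gives
\[
\partial_\alpha U_{\mathrm{com}}=\frac{U_{\mathrm{com}}}{2\alpha},\qquad
\partial_\alpha U_{\mathrm{sens}}=-\frac{U_{\mathrm{sens}}}{2(1-\alpha)}.
\]
The multivariate chain rule then yields
\[
\frac{\partial_\alpha g_b}{g_b}=\frac{\partial_\alpha U_b}{U_b}
+\sum_{b'}\frac{1}{\mathcal A_b}\frac{\partial\mathcal A_b}{\partial U_{b'}}\,\partial_\alpha U_{b'}.
\]
Each term in the sum equals $\eta_{bb'}\,\partial_\alpha U_{b'}/U_{b'}$.

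\textbf{Assembling the two identities.} For $b=\mathrm{com}$, the explicit term is $1/(2\alpha)$, the own-branch term is $\eta_{\mathrm{com},\mathrm{com}}/(2\alpha)$, and the cross term is $-\eta_{\mathrm{com},\mathrm{sens}}/(2(1-\alpha))$. This is \eqref{eq:communication_gain_competition}. For $b=\mathrm{sens}$, the explicit term is $-1/(2(1-\alpha))$, the own-branch term is $-\eta_{\mathrm{sens},\mathrm{sens}}/(2(1-\alpha))$, and the cross term is $+\eta_{\mathrm{sens},\mathrm{com}}/(2\alpha)$. This is \eqref{eq:cross_damping}.

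\textbf{Main obstacle.} The algebra is routine. The only delicate point is justifying that $\mathcal A_b$ is a differentiable function of the amplitudes alone, with the LO phases fixed so that $|\Omega_{b,\mathrm{LO}}|$ fully parameterizes the bias. If a fuller justification is wanted, I would use the implicit function theorem on the trace-constrained system in Appendix~\ref{app:proof_cross_damping}, which gives smooth dependence of $\bar{\mathbf x}$ and of the resolvent on $U_{b'}$. The lemma's differentiability hypothesis is intended to cover exactly this.
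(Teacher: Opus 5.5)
Your proposal is correct and follows the paper's proof essentially step for step. The appendix also writes $g_b=U_b\mathcal A_b$, applies the chain rule in logarithmic form as $\partial_\alpha\ln g_b=\partial_\alpha\ln U_b+\sum_{b'}\eta_{bb'}\partial_\alpha\ln U_{b'}$, substitutes $\partial_\alpha\ln U_{\mathrm{com}}=1/(2\alpha)$ and $\partial_\alpha\ln U_{\mathrm{sens}}=-1/[2(1-\alpha)]$, and then uses the trace-constrained sensitivity system only to evaluate $\eta_{bb'}$, which is the role you assign it.
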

	\begin{IEEEproof}
		The derivation and the calculation of $\eta_{bb'}$ from the LO-biased Liouvillian are provided in Appendix~\ref{app:proof_cross_damping}.
	\end{IEEEproof}

	The first term in \eqref{eq:communication_gain_competition} is the direct mixing enhancement associated with the communication LO amplitude. The second term accounts for the accompanying change in the response per unit LO, including self broadening and population redistribution. A negative $\eta_{\mathrm{com},\mathrm{com}}$ reduces the direct enhancement, with their combined contribution changing sign at $\eta_{\mathrm{com},\mathrm{com}}=-1$. The third term captures the effect of reducing the sensing LO; it increases the communication gain when $\eta_{\mathrm{com},\mathrm{sens}}<0$.

	For sensing, \eqref{eq:cross_damping} separates direct LO loss, the own-branch response change, and the influence of the communication LO. A negative $\eta_{\mathrm{sens},\mathrm{sens}}$ gives gain recovery as the sensing LO is reduced, whereas $\eta_{\mathrm{sens},\mathrm{com}}<0$ yields the IBGI penalty. These coefficients include changes in the steady-state photocurrent and the RF-to-optical conversion slope. The own-branch term therefore includes self broadening and population redistribution, while a negative cross-branch coefficient identifies local IBGI; their signs are determined by the operating point.

	IBGI acts through the atomic operating point and branch transconductances. It can therefore couple spectrally separated RF branches that share $|3\rangle$, even when their waveforms are separated at the readout. Figs.~\ref{fig:ibgi_allocation_mechanism} and~\ref{fig:ibgi_coupling_control} illustrate the gain competition in Lemma~\ref{lem:cross_damping}.

\begin{figure}[!t]
\centering
\includegraphics[width=\columnwidth]{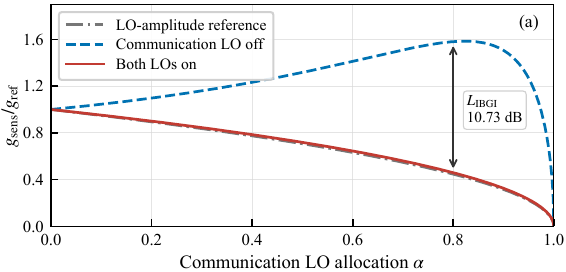}
\par\vspace{1mm}
\includegraphics[width=\columnwidth]{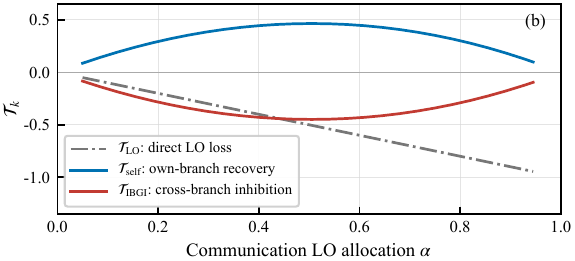}
\caption{IBGI under LO allocation at $\Omega_c/(2\pi)=6$~MHz. (a) Normalized sensing gains with both LOs on and with the communication LO off, compared with direct LO-amplitude scaling. (b) Direct LO loss, own-branch recovery, and cross-branch inhibition in \eqref{eq:cross_damping}.}
\label{fig:ibgi_allocation_mechanism}
\end{figure}

Fig.~\ref{fig:ibgi_allocation_mechanism}(a) illustrates IBGI during LO allocation. At the same sensing bias, the simultaneous gain remains below the communication-LO-off reference, with a $10.73$~dB penalty at $\alpha=0.8$. The communication LO couples the shared state $|3\rangle$ to the damped state $|4\rangle$, adding a contribution to the common coherence denominator and suppressing the optical response to the sensing perturbation. This means that the sensing conversion efficiency is also affected by the communication bias, even when its own LO remains fixed.

Fig.~\ref{fig:ibgi_allocation_mechanism}(b) separates the allocation sensitivity in \eqref{eq:cross_damping} into direct LO loss, own-branch recovery, and cross-branch inhibition. Multiplying each term by the common positive factor $2\alpha(1-\alpha)$ for $0<\alpha<1$ gives the dimensionless contributions $\mathcal T_{\mathrm{LO}}=-\alpha,\ \mathcal T_{\mathrm{self}}=-\alpha\eta_{\mathrm{sens},\mathrm{sens}},\ \mathcal T_{\mathrm{IBGI}}=(1-\alpha)\eta_{\mathrm{sens},\mathrm{com}}.$
Their sum equals $2\alpha(1-\alpha)\partial_\alpha\ln g_{\mathrm{sens}}^{\mathrm{joint}}$. Reducing the sensing LO alleviates its own broadening, while strengthening the communication LO introduces a comparable inhibition through the shared coherence. Consequently, IBGI largely offsets the response recovery available from reducing the sensing bias, leaving the simultaneous gain approximately proportional to $\sqrt{1-\alpha}$. This near-amplitude scaling therefore reflects compensation between two substantial atomic-response contributions.

\begin{figure}[!t]
\centering
\includegraphics[width=3in]{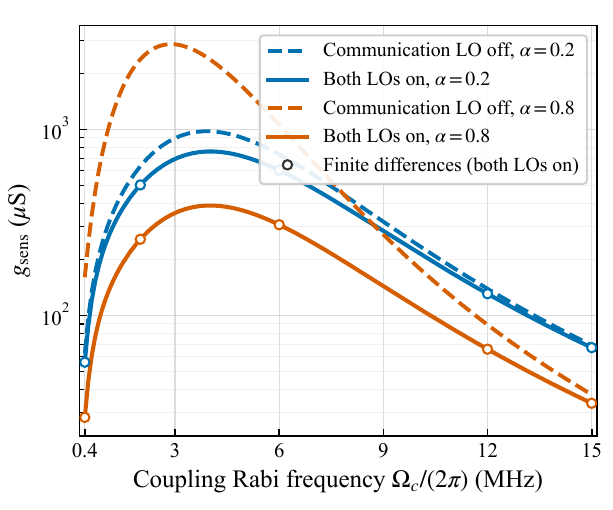}
\caption{Sensing transconductance versus coupling Rabi frequency for $\alpha=0.2$ and $0.8$. Solid curves show both LOs on; dashed curves show the communication-LO-off reference at the same sensing bias. Open circles denote independent finite-difference validation of the simultaneous-operation gains.}
\label{fig:ibgi_coupling_control}
\end{figure}

Fig.~\ref{fig:ibgi_coupling_control} illustrates the dependence of IBGI on optical coupling strength for $\alpha=0.2$ and $0.8$. Switching on the communication LO reverses the allocation ordering over the lower-to-intermediate coupling range and shifts the sensing-gain maximum toward stronger coupling. The added microwave coupling changes the shared coherence and its balance with the optical drive. Thus, an optical setting optimized for isolated sensing need not maximize the simultaneous sensing gain.
At stronger coupling, optical dressing reduces the probe response to both microwave branches. The weaker response to the communication bias reduces relative IBGI, while the weaker response to the sensing perturbation lowers the useful conversion gain. Therefore, minimizing the relative IBGI penalty does not maximize the sensing transconductance, and optical coupling and LO allocation must be selected according to their combined effect on the atomic response.

\FloatBarrier

	\section{C-R Region Characterization}    
	\label{sec:C-R}
	This section formulates the communication rate and AoA CRB, defines the achievable C-R region, and presents its joint optimization and endpoint characterization.
	
	\subsection{Performance Metrics}
	\label{sec:performance}

	For an $M$-element RAR array, the post detection complex baseband communication model is given by
	\begin{equation}
		\mathbf{y}_{\mathrm{com}}
		=
		\sqrt{\frac{\eta_0P_{\mathrm{tx}}R_{\mathrm{L}}}{A_{\mathrm{e}}}}
		\ell_{\rm cell}\mathbf{G}_{\mathrm{com}}\mathbf{h}_{\mathrm{com}}x_{\mathrm{com}}
		+
		\mathbf{w}_{\mathrm{com}},
		\label{eq:com_signal_model}
	\end{equation}
where $\mathbf{y}_{\mathrm{com}}\in\mathbb{C}^{M\times1}$, $x_{\mathrm{com}}$ is the normalized communication symbol with $\mathbb{E}[|x_{\mathrm{com}}|^2]=1$, $\mathbf{h}_{\mathrm{com}}$ is defined in the RF signal model, and $R_{\mathrm{L}}$ is the readout load resistance. The transconductance matrix is $\mathbf{G}_{\mathrm{com}}=g_{\mathrm{com}}(\Omega_c,\alpha;\omega_{\mathrm{com}})\mathbf{I}_M$, where $g_{\mathrm{com}}$ is obtained from \eqref{eq:g_def}. The noise vector is modeled as $\mathbf{w}_{\mathrm{com}}\sim\mathcal{CN}(\mathbf{0},\sigma_{n,\mathrm{com}}^2\mathbf{I}_M)$, with $\sigma_{n,\mathrm{com}}^2$ specified in \eqref{eq:total_noise_variance}.
	
	With maximal ratio combining at the BS, the effective communication SNR is
	\begin{equation}
		\gamma_{\mathrm{com}}(\Omega_c,\alpha)
		=
		\frac{
			\eta_0P_{\mathrm{tx}}R_{\mathrm{L}}\ell_{\rm cell}^2
			g_{\mathrm{com}}^2(\Omega_c,\alpha;\omega_{\mathrm{com}})
			\|\mathbf{h}_{\mathrm{com}}\|^2
		}{
			A_{\mathrm{e}}\sigma_{n,\mathrm{com}}^2
		}.
		\label{eq:snr_com}
	\end{equation}
	The achievable communication rate is then given by
	\begin{equation}
		R(\Omega_c,\alpha)
		=
		B_{\mathrm{com}}(\Omega_c,\alpha)
		\log_2
		\left(
		1+\gamma_{\mathrm{com}}(\Omega_c,\alpha)
		\right),
		\label{eq:rate_with_sigma_total}
	\end{equation}
	
	where $B_{\mathrm{com}}(\Omega_c,\alpha)$ is the effective bandwidth in Section~\ref{sec:dynamic_bandwidth}.
	The LO allocation affects the rate through the branch gain, dynamic bandwidth, and readout noise. At a differentiable operating point with $g_{\mathrm{com}}>0$, differentiation of \eqref{eq:snr_com} and \eqref{eq:rate_with_sigma_total} gives
	\begin{equation}
		\begin{aligned}
		\partial_\alpha R
		&=\frac{\partial_\alpha B_{\mathrm{com}}}{\ln 2}
		\ln(1+\gamma_{\mathrm{com}})\\
		&\quad+\frac{B_{\mathrm{com}}\gamma_{\mathrm{com}}}
		{(1+\gamma_{\mathrm{com}})\ln 2}
		\left(\frac{2\partial_\alpha g_{\mathrm{com}}}{g_{\mathrm{com}}}
		-\partial_\alpha\ln\sigma_{n,\mathrm{com}}^2\right).
		\end{aligned}
		\label{eq:rate_allocation_derivative}
	\end{equation}

	For sensing, the receiver estimates the target AoA $\theta$ from the radar echoes. For $K$ snapshots, the post detection sensing observation at the $k$-th snapshot is modeled as
	\begin{equation}
		\begin{aligned}
		\mathbf{y}_{\mathrm{sens},k}
		&=
		\sqrt{\frac{\eta_0P_{\mathrm{echo}}R_{\mathrm{L}}}{A_{\mathrm{e}}}}
		\ell_{\rm cell}g_{\mathrm{sens}}(\Omega_c,\alpha;\omega_{\mathrm{sens}})\\
		&\quad\times\mathbf{a}(\theta)s_k+\mathbf{w}_{\mathrm{sens},k},
		\end{aligned}
		\label{eq:sens_signal_model}
	\end{equation}
	where $s_k$ is the unknown deterministic complex echo coefficient at the $k$-th snapshot with normalized power $|s_k|=1$, $k=1,\ldots,K$, the noise vectors are independent across snapshots, and $\mathbf{a}(\theta)$ is the array spatial vector defined in \eqref{eq:field_sens}. The sensing branch transconductance $g_{\mathrm{sens}}$ is evaluated from \eqref{eq:g_def}, and the noise vector satisfies $\mathbf{w}_{\mathrm{sens},k}\sim\mathcal{CN}(\mathbf{0},\sigma_{n,\mathrm{sens}}^2\mathbf{I}_M)$, where $\sigma_{n,\mathrm{sens}}^2$ is given in \eqref{eq:total_noise_variance}.
			
	\begin{lemma}
		\label{lem:CRB_AoA}
		For RAR-based ISAC systems, the AoA CRB is given by
		\begin{equation}
			\mathrm{CRB}(\theta,\Omega_c,\alpha)
			=
			\frac{1}{
				2K
				\gamma_{\mathrm{sens}}(\Omega_c,\alpha)
				\|\mathbf{P}_{\perp}(\theta)\dot{\mathbf{a}}(\theta)\|^2
			},
			\label{eq:CRB_sensing}
		\end{equation}
		where $\gamma_{\mathrm{sens}} = {\eta_0P_{\mathrm{echo}}R_{\mathrm{L}}\ell_{\rm cell}^2g_{\mathrm{sens}}^2}/{A_{\mathrm{e}}\sigma_{n,\mathrm{sens}}^2}$ is the dimensionless sensing SNR, and $\mathbf P_{\perp}=\mathbf I_M-\mathbf a\mathbf a^{\mathrm H}/M$ projects orthogonally to the steering vector $\mathbf a(\theta)$.	
		For a ULA with spacing $d$, \eqref{eq:CRB_sensing} reduces to
		\begin{equation}
			\mathrm{CRB}(\theta)
			\!=\!
			\frac{
				6A_{\mathrm{e}}\sigma_{n,\mathrm{sens}}^2
			}{
				K\eta_0P_{\mathrm{echo}}R_{\mathrm{L}}\ell_{\rm cell}^2
				g_{\mathrm{sens}}^2 M(M^2-1)(kd\cos\theta)^2
			}.
		\label{eq:CRB_sensing_final}
		\end{equation}
	\end{lemma}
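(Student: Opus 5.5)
Starting from the sensing model \eqref{eq:sens_signal_model}, we will write $\beta\triangleq\sqrt{\eta_0P_{\mathrm{echo}}R_{\mathrm L}/A_{\mathrm e}}\,\ell_{\rm cell}g_{\mathrm{sens}}$, so that $\mathbf y_{\mathrm{sens},k}=\beta\mathbf a(\theta)s_k+\mathbf w_{\mathrm{sens},k}$ with i.i.d. $\mathcal{CN}(\mathbf 0,\sigma_{n,\mathrm{sens}}^2\mathbf I_M)$ noise. The unknown parameter vector is $\theta$ together with the nuisance coefficients $\{s_k\}$, which are treated as deterministic unknowns; equivalently we absorb $\beta$ into $s_k$ and treat $\{\operatorname{Re}s_k,\operatorname{Im}s_k\}$ as real nuisances. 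We will then apply the Slepian--Bangs formula for a complex Gaussian observation with parameter-dependent mean $\boldsymbol\mu_k=\beta\mathbf a(\theta)s_k$: $\mathbf J=\frac{2}{\sigma_{n,\mathrm{sens}}^2}\sum_k\operatorname{Re}\{(\partial\boldsymbol\mu_k)^{\mathrm H}\partial\boldsymbol\mu_k\}$.

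The key step will be eliminating the nuisance parameters through the Schur complement. The derivatives are $\partial_\theta\boldsymbol\mu_k=\beta s_k\dot{\mathbf a}$, $\partial_{\operatorname{Re}s_k}\boldsymbol\mu_k=\beta\mathbf a$, and $\partial_{\operatorname{Im}s_k}\boldsymbol\mu_k=j\beta\mathbf a$. Since each $s_k$ appears only in snapshot $k$, the nuisance block is block diagonal, and the Schur complement splits snapshot by snapshot. For a single snapshot, projecting $\beta s_k\dot{\mathbf a}$ onto the real span of $\{\beta\mathbf a,j\beta\mathbf a\}$, which is the complex span of $\mathbf a$, leaves the residual $\beta s_k\mathbf P_\perp\dot{\mathbf a}$. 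Here $\mathbf P_\perp=\mathbf I_M-\mathbf a\mathbf a^{\mathrm H}/M$, using $\|\mathbf a\|^2=M$. The per-snapshot effective Fisher information is therefore $\frac{2}{\sigma_{n,\mathrm{sens}}^2}|\beta|^2|s_k|^2\|\mathbf P_\perp\dot{\mathbf a}\|^2$. Summing over $k$ with $|s_k|=1$ and noting $|\beta|^2/\sigma_{n,\mathrm{sens}}^2=\gamma_{\mathrm{sens}}$ gives \eqref{eq:CRB_sensing}. This projection step, namely checking that the real-parameter Schur complement equals the complex orthogonal projection, is the only nontrivial part; we will handle it by noting that the real inner product $\operatorname{Re}\{\mathbf u^{\mathrm H}\mathbf v\}$ on $\mathbb C^M$ makes the real span of $\{\mathbf a,j\mathbf a\}$ coincide with the complex line through $\mathbf a$.

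For the ULA reduction, we will compute $\dot a_m=j(m-1)kd\cos\theta\,a_m$. Then $\|\dot{\mathbf a}\|^2=(kd\cos\theta)^2\sum_{m=0}^{M-1}m^2$ and $\mathbf a^{\mathrm H}\dot{\mathbf a}=j\,kd\cos\theta\sum_{m=0}^{M-1}m$. Hence $\|\mathbf P_\perp\dot{\mathbf a}\|^2=\|\dot{\mathbf a}\|^2-|\mathbf a^{\mathrm H}\dot{\mathbf a}|^2/M=(kd\cos\theta)^2\bigl[\tfrac{(M-1)M(2M-1)}{6}-\tfrac{M(M-1)^2}{4}\bigr]=(kd\cos\theta)^2\tfrac{M(M^2-1)}{12}$. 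Substituting this and the explicit $\gamma_{\mathrm{sens}}$ into \eqref{eq:CRB_sensing}, the factor $2K\cdot\frac{1}{12}=K/6$ yields \eqref{eq:CRB_sensing_final}. The remaining work is routine bookkeeping of the constants $A_{\mathrm e}$, $\eta_0$, $R_{\mathrm L}$ and $\ell_{\rm cell}$.
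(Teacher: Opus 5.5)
Your proposal is correct and follows essentially the same route as the paper: the Gaussian Fisher information with the real and imaginary parts of each $s_k$ as nuisances, and Schur-complement elimination over the block-diagonal nuisance block giving $\frac{2K|h_{\mathrm{sens}}|^2}{\sigma_{n,\mathrm{sens}}^2}\|\mathbf P_\perp\dot{\mathbf a}\|^2$, followed by the ULA sums that yield $M(M^2-1)/12$. Your real-inner-product argument, showing that the per-snapshot Schur complement equals the orthogonal projection onto the complex complement of $\mathbf a$, usefully fills in a step the paper only states.
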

	\begin{IEEEproof}
		Its derivation is provided in Appendix~\ref{app:proof_CRB_AoA}.
	\end{IEEEproof}

For the multi-target load evaluation, consider $L$ targets at angles $\theta_l$, $l=1,\ldots,L$, whose returns can be separated without mutual interference. Each separated observation follows \eqref{eq:sens_signal_model} with the same $K$ snapshots and branch noise model. Keeping the total echo power $P_{\mathrm{echo}}$ fixed and assigning equal powers $P_{\mathrm{echo}}/L$, \eqref{eq:CRB_sensing_final} gives the average marginal AoA bound

\begin{equation}
	\overline C_L(\Omega_c,\alpha)
	=\frac{1}{L}\sum_{l=1}^{L}
	\frac{L C_0(\Omega_c,\alpha)}{\cos^2\theta_l},
\end{equation}

where $C_0(\Omega_c,\alpha)$ is the single-target broadside CRB evaluated with the full power $P_{\mathrm{echo}}$. All targets share the existing sensing branch and its atomic operating point. 
	
\subsection{Physically Feasible Pareto Front}
	For each candidate $\boldsymbol{\psi}$, the physically feasible space $\mathcal {F}$ is defined as
	\begin{equation}
		\mathcal{F} = \left\{ (\Omega_c, \alpha) \left|
		\begin{array}{l}
			\Omega_c^{\min} \le \Omega_c \le \Omega_c^{\max}, \\
			0 \le \alpha \le 1
		\end{array}
		\right. \right\},
	\end{equation}
	where $\Omega_c^{\min}/2\pi=0.4~\mathrm{MHz}$ and $\Omega_c^{\max}/2\pi=15~\mathrm{MHz}$. This range spans the weak-EIT and AT spectral regimes.
	
	The ISAC design aims to maximize the communication rate and minimize the sensing estimation error. This yields the multi objective optimization problem
	\begin{equation}
		\min_{\boldsymbol{\psi} \in \mathcal{F}} \quad \mathbf{F}(\boldsymbol{\psi}) = \left[ -R(\boldsymbol{\psi}), \, \mathrm{CRB}(\boldsymbol{\psi}) \right]^T.
		\label{eq:MOO}
	\end{equation}
	The solution of \eqref{eq:MOO} is the Pareto boundary of the achievable C-R region, denoted by $\mathcal{C}_{C\text{-}R}$. A point $\boldsymbol{\psi}^{\star}$ is Pareto optimal if there is no other feasible $\boldsymbol{\psi}\in\mathcal{F}$ satisfying $R(\boldsymbol{\psi})\ge R(\boldsymbol{\psi}^{\star})$ and $\mathrm{CRB}(\boldsymbol{\psi})\le\mathrm{CRB}(\boldsymbol{\psi}^{\star})$, with at least one strict inequality.

	To characterize the Pareto boundary, we adopt the $\epsilon$-constraint formulation \cite{mavrotas2009epsilon}. For the fixed target angle $\theta$, write $C(\boldsymbol\psi)\triangleq\mathrm{CRB}(\theta,\Omega_c,\alpha)$. Given a required rate $R_{\mathrm{req}}$, we first minimize the sensing CRB:
	\begin{subequations}\label{prob:P_constrained}
		\begin{align}
		\min_{\boldsymbol\psi}\quad &C(\boldsymbol\psi)\\
		\mathrm{s.t.}\quad &R(\boldsymbol\psi)\ge R_{\mathrm{req}},\label{eq:rate_const_phys}\\
		&\boldsymbol\psi\in\mathcal F.
		\end{align}
	\end{subequations}
	
	Let $C^\star(R_{\mathrm{req}})$ be its optimal value and define the minimizer set of \eqref{prob:P_constrained} as
	\begin{equation}
		\begin{aligned}
		\mathcal S(R_{\mathrm{req}})\triangleq
		\{\boldsymbol\psi\in\mathcal F:\;&R(\boldsymbol\psi)\ge R_{\mathrm{req}},\\
		&C(\boldsymbol\psi)=C^\star(R_{\mathrm{req}})\}.
		\end{aligned}
		\label{eq:crb_minimizer_set}
	\end{equation}
	Then, among these minimizers, we select the one with the largest communication rate:
	\begin{equation}
		\boldsymbol\psi^\star(R_{\mathrm{req}})\in
		\underset{\boldsymbol\psi\in\mathcal S(R_{\mathrm{req}})}{\arg\max}
		R(\boldsymbol\psi).
		\label{eq:pareto_rate_tiebreak}
	\end{equation}

	\begin{proposition}[Pareto Optimality of the $\epsilon$-Constraint Solution]
		\label{prop:tightness}
		For a feasible rate requirement, suppose that the optima of \eqref{prob:P_constrained} and \eqref{eq:pareto_rate_tiebreak} are attained. Every resulting design $\boldsymbol\psi^\star(R_{\mathrm{req}})$ is Pareto optimal. Conversely, every Pareto-optimal design $\boldsymbol\psi_{\mathrm P}$ solves both \eqref{prob:P_constrained} and \eqref{eq:pareto_rate_tiebreak} for $R_{\mathrm{req}}=R(\boldsymbol\psi_{\mathrm P})$.
	\end{proposition}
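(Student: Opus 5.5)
Both directions follow from the definitions of Pareto optimality, of the feasible set of \eqref{prob:P_constrained}, and of the minimizer set $\mathcal S(R_{\mathrm{req}})$ in \eqref{eq:crb_minimizer_set}; no property of $R$ or $C$ beyond attainment of the optima is needed.

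\emph{Forward direction.} Fix a feasible $R_{\mathrm{req}}$ and let $\boldsymbol\psi^\star=\boldsymbol\psi^\star(R_{\mathrm{req}})$. We argue by contradiction. Suppose some $\boldsymbol\psi\in\mathcal F$ dominates $\boldsymbol\psi^\star$, i.e., $R(\boldsymbol\psi)\ge R(\boldsymbol\psi^\star)$ and $C(\boldsymbol\psi)\le C(\boldsymbol\psi^\star)$, with at least one inequality strict. Since $R(\boldsymbol\psi^\star)\ge R_{\mathrm{req}}$, the point $\boldsymbol\psi$ satisfies \eqref{eq:rate_const_phys}, so it is feasible for \eqref{prob:P_constrained}. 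We split into two cases.
\begin{itemize}
\item[(i)] If $C(\boldsymbol\psi)<C(\boldsymbol\psi^\star)=C^\star(R_{\mathrm{req}})$, this contradicts the optimality of $C^\star$.
\item[(ii)] Otherwise $C(\boldsymbol\psi)=C^\star(R_{\mathrm{req}})$, so $\boldsymbol\psi\in\mathcal S(R_{\mathrm{req}})$. Dominance then forces $R(\boldsymbol\psi)>R(\boldsymbol\psi^\star)$, which contradicts the tie-break maximization \eqref{eq:pareto_rate_tiebreak}.
\end{itemize}
Hence $\boldsymbol\psi^\star$ is Pareto optimal.

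\emph{Converse.} Let $\boldsymbol\psi_{\mathrm P}$ be Pareto optimal and set $R_{\mathrm{req}}=R(\boldsymbol\psi_{\mathrm P})$. This requirement is feasible, witnessed by $\boldsymbol\psi_{\mathrm P}$ itself.
\begin{itemize}
\item[(a)] If some feasible $\boldsymbol\psi$ had $C(\boldsymbol\psi)<C(\boldsymbol\psi_{\mathrm P})$, then together with $R(\boldsymbol\psi)\ge R(\boldsymbol\psi_{\mathrm P})$ it would dominate $\boldsymbol\psi_{\mathrm P}$. So $C(\boldsymbol\psi_{\mathrm P})=C^\star(R_{\mathrm{req}})$, and the infimum is attained at $\boldsymbol\psi_{\mathrm P}$, which means $\boldsymbol\psi_{\mathrm P}\in\mathcal S(R_{\mathrm{req}})$.
\item[(b)] If some $\boldsymbol\psi\in\mathcal S(R_{\mathrm{req}})$ had $R(\boldsymbol\psi)>R(\boldsymbol\psi_{\mathrm P})$, then, having equal CRB, it would again dominate $\boldsymbol\psi_{\mathrm P}$. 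So $\boldsymbol\psi_{\mathrm P}$ attains the maximum in \eqref{eq:pareto_rate_tiebreak}.
\end{itemize}

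The only delicate step is the equality case (ii) of the forward direction. Without the lexicographic tie-break, a CRB minimizer could be only weakly Pareto optimal, so the proof must invoke \eqref{eq:pareto_rate_tiebreak} exactly there. The attainment assumption is what makes $C^\star$ a minimum and $\boldsymbol\psi^\star$ well defined.
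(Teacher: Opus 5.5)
Your proof is correct and follows the paper's argument essentially step for step. The forward direction argues by contradiction, splitting on a strictly smaller versus an equal CRB and invoking the tie-break \eqref{eq:pareto_rate_tiebreak} in the equality case; the converse shows that a strictly smaller CRB, or a strictly larger rate within $\mathcal S(R_{\mathrm{req}})$, would dominate $\boldsymbol\psi_{\mathrm P}$. You also make explicit two points the paper leaves implicit: the requirement $R_{\mathrm{req}}=R(\boldsymbol\psi_{\mathrm P})$ is feasible, witnessed by $\boldsymbol\psi_{\mathrm P}$ itself, and attainment of the optimum is automatic in the converse direction.
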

	\begin{IEEEproof}
		Suppose a feasible design $\widetilde{\boldsymbol\psi}$ dominates $\boldsymbol\psi^\star$. Then $R(\widetilde{\boldsymbol\psi})\ge R(\boldsymbol\psi^\star)\ge R_{\mathrm{req}}$. If $C(\widetilde{\boldsymbol\psi})<C(\boldsymbol\psi^\star)$, the optimality of \eqref{prob:P_constrained} is contradicted. Otherwise, dominance requires equal CRBs and $R(\widetilde{\boldsymbol\psi})>R(\boldsymbol\psi^\star)$, which contradicts the maximality in \eqref{eq:pareto_rate_tiebreak}. Hence, $\boldsymbol\psi^\star$ is Pareto optimal.

		Conversely, set $R_{\mathrm{req}}=R(\boldsymbol\psi_{\mathrm P})$ for a Pareto-optimal design $\boldsymbol\psi_{\mathrm P}$. A smaller CRB in \eqref{prob:P_constrained} would produce a dominating design, so $C^\star(R_{\mathrm{req}})=C(\boldsymbol\psi_{\mathrm P})$. A larger rate in \eqref{eq:pareto_rate_tiebreak} would also dominate $\boldsymbol\psi_{\mathrm P}$. Thus, $\boldsymbol\psi_{\mathrm P}$ solves both \eqref{prob:P_constrained} and \eqref{eq:pareto_rate_tiebreak}.
	\end{IEEEproof}

\begin{algorithm}[t]
\caption{Rate constrained joint optimization (RCJO)}
\label{alg:RCJO}
\small
\begin{algorithmic}[1]
\REQUIRE Descending $\{R_{\rm req}^{(k)}\}_{k=1}^{N_R}$, $\mathcal Z_0$, $\mathcal P_0$, $N_{\rm start}$, $\Delta_0$, $\Delta_{\min}$, $I_{\max}$, $\kappa$, $c_{\min}$.
\STATE $\mathcal K\gets\varnothing$, $\mathcal P\gets\mathcal P_0$, $\boldsymbol z^{\rm prev}\gets\boldsymbol z^{\rm F}$.
\FOR{$k=1,\ldots,N_R$}
\STATE Set $r=R_{\rm req}^{(k)}$; select starts $\mathcal Z_k$; skip if empty.
\FOR{each start; reset $(\boldsymbol z,\Delta,i)\gets(\boldsymbol z^{(0)},\Delta_0,0)$}
\WHILE{$\Delta>\Delta_{\min}$ and $i<I_{\max}$}
\STATE $\widehat{\boldsymbol z}\gets\arg\min_{\boldsymbol v\in\mathcal V_r(\boldsymbol z,\Delta)}\widetilde C(\boldsymbol v)$; $i\gets i+1$.
\STATE $\boldsymbol z\gets\widehat{\boldsymbol z}$ if sufficient decrease; else $\Delta\gets\Delta/2$.
\ENDWHILE
\STATE Store $\boldsymbol z$ and its stopping status in $\mathcal Q_k$.
\ENDFOR
\STATE Set $\boldsymbol z^{\rm prev}$ to the best result in $\mathcal Q_k$; append its pair to $\mathcal P$.
\ENDFOR
\RETURN $\operatorname{ND}(\mathcal P)$ and recorded statuses.
\end{algorithmic}
\end{algorithm}

	The difficulty in solving~\eqref{prob:P_constrained} lies in the fact that $R$ and $C$ are available only through numerical evaluation of the Liouvillian steady state and its RF small-signal linearization. Moreover, $R$ is not necessarily monotonic in $\alpha$. However, since the control vector is only two-dimensional and the rate constraint can be verified pointwise, \eqref{prob:P_constrained} can be solved by a derivative-free search confined to the feasible region. To balance the two controls, we normalize them as
	\begin{equation}
		\boldsymbol\psi(\boldsymbol z)
		=\bigl(\Omega_c^{\min}+(\Omega_c^{\max}-\Omega_c^{\min})z_1,\ z_2\bigr),
		\quad \boldsymbol z\in[0,1]^2,
	\end{equation}
	and write $\widetilde R\triangleq R\circ\boldsymbol\psi$ and $\widetilde C\triangleq C\circ\boldsymbol\psi$. For a rate requirement $r$, the feasible neighborhood of $\boldsymbol z$ at scale $\Delta$ is defined along the coordinate and diagonal directions $\mathcal D=\{(\pm1,0),(0,\pm1),(\pm1,\pm1)\}$ as
	\begin{equation}
		\begin{aligned}
			\mathcal V_r(\boldsymbol z,\Delta)=\{\boldsymbol z\}\cup\{&\boldsymbol v=\Pi_{[0,1]^2}(\boldsymbol z+\Delta\boldsymbol d):
			\boldsymbol d\in\mathcal D,\\
			&\widetilde R(\boldsymbol v)\ge r,\quad
			0<\widetilde C(\boldsymbol v)<\infty\},
		\end{aligned}
	\end{equation}
	where $\Pi_{[0,1]^2}(\cdot)$ denotes the projection onto the box. Minimizing $\widetilde C$ over $\mathcal V_r(\boldsymbol z,\Delta)$ yields a candidate $\widehat{\boldsymbol z}$. The candidate replaces $\boldsymbol z$ if
	\begin{equation}
		\widetilde C(\widehat{\boldsymbol z})<\widetilde C(\boldsymbol z)-\kappa\max\bigl\{\widetilde C(\boldsymbol z),c_{\min}\bigr\},
	\end{equation}
	where $\kappa$ is a relative tolerance and $c_{\min}$ an absolute floor. Otherwise, the scale $\Delta$ is halved. Hence, every accepted update preserves rate feasibility and strictly reduces the CRB. The search terminates at a design whose feasible neighbors no longer improve the CRB at the finest scale $\Delta_{\min}$, or when the iteration budget $I_{\max}$ is exhausted. The stopping status records which of these two conditions ends the search.

	Repeating this search over a sequence of rate requirements $\{R_{\rm req}^{(k)}\}_{k=1}^{N_R}$ traces the C-R boundary. The two ends of this sequence are given by the communication and sensing endpoints $\boldsymbol z^{\rm F}$ and $\boldsymbol z^{\rm S}$. These endpoints are obtained by the same search without the rate constraint, maximizing $\widetilde R$ and minimizing $\widetilde C$, respectively, with $\alpha$ kept strictly below one. We process the requirements in descending order because each solution $\boldsymbol z^{\rm prev}$ satisfies $\widetilde R(\boldsymbol z^{\rm prev})\ge R_{\rm req}^{(k-1)}\ge R_{\rm req}^{(k)}$ and is thus a feasible warm start for the next requirement. To mitigate local minima, this warm start is complemented by the seed pool $\mathcal Z_0=\{\boldsymbol z^{\rm F},\boldsymbol z^{\rm S},\boldsymbol z^{\rm C}\}$, where $\boldsymbol z^{\rm C}$ is the box center. For the $k$-th requirement, the start set $\mathcal Z_k$ collects up to $N_{\rm start}$ well-separated feasible members of $\mathcal Z_0\cup\{\boldsymbol z^{\rm prev}\}$, ranked by their CRB values; if no member is feasible, $\mathcal Z_k$ is empty and the requirement is skipped. The results obtained from all starts in $\mathcal Z_k$ are collected in $\mathcal Q_k$. The performance archive $\mathcal P$ is initialized as $\mathcal P_0=\{(\widetilde R(\boldsymbol z^{\rm F}),\widetilde C(\boldsymbol z^{\rm F})),(\widetilde R(\boldsymbol z^{\rm S}),\widetilde C(\boldsymbol z^{\rm S}))\}$, and the nondominated subset of all resulting performance pairs, $\operatorname{ND}(\mathcal P)$, constitutes the estimated boundary. The overall procedure is summarized in Algorithm~\ref{alg:RCJO}.

	Let $C_{\rm phys}$ denote the cost of one steady-state evaluation. Since each iteration requires at most $|\mathcal D|=8$ new evaluations, the complexity of Algorithm~\ref{alg:RCJO} is $\mathcal O(N_RN_{\rm start}I_{\max}|\mathcal D|C_{\rm phys})$, excluding the fixed cost of computing the two seeds. Repeated evaluations are avoided by caching them in $\mathcal K$. 

	\section{Numerical Results}
	\label{sec:numerical}
	Building on the mechanism analysis in Section~\ref{sec:tradeoffs}, this section evaluates the achievable C-R tradeoff and the effects of sensing load and system parameters.

	\subsection{Simulation Setup}
	\label{sec:sim_setup}
	For the uplink communication link, a single antenna user in the far field is considered, and $\mathbf h_{\mathrm{com}}$ is the deterministic far-field channel with the angle-dependent ULA phase in \eqref{eq:field_com}. For the sensing task, the target is modeled as a point scatterer with unknown deterministic echo coefficients $|s_k|=1$, and the AoA CRB is evaluated over $K$ snapshots. The main simulation parameters are summarized in Table~\ref{tab:sim_params}.
	
	\begin{table}[t!]
		\caption{Simulation Parameters}
		\label{tab:sim_params}
		\centering
		\footnotesize
		\setlength{\tabcolsep}{6pt} 
		\renewcommand{\arraystretch}{1.4} 
		
		\begin{tabular}{@{} l | c | l | c @{}} 
			\hline
			\multicolumn{2}{c|}{Atomic Parameters} & \multicolumn{2}{c}{ISAC System Parameters} \\
			\hline
			\textbf{Parameter} & \textbf{Value} & \textbf{Parameter} & \textbf{Value} \\
			\hline
			Atomic species & $^{133}\text{Cs}$ & Comm. Freq. ($f_{\mathrm{LO},c}$) & $12.36$ GHz \\
			Probe laser ($\lambda_p$) & $852$ nm & Sens. Freq. ($f_{\mathrm{LO},s}$) & $17.17$ GHz \\
			Coupling laser ($\lambda_c$) & $510.5$ nm & Array Elements ($M$) & $8$ \\
			Input DC ($I_{\mathrm{in}}$) & $80~\mu$A & Element Spacing ($d$) & $8$ mm \\
			Probe Rabi ($\Omega_p$) & $0.8\Gamma_2$ & Snapshots ($K$) & $100$ \\
			Cell Length ($\ell_{\rm cell}$) & $2$ cm & Tx Power ($P_{\text{tx}}$) & $30$ dBm \\
			Beam Waist ($w_0$) & $500~\mu$m & Temperature ($T$) & $300$ K \\
			\hline
		\end{tabular}
		\\[2pt]
		\parbox{0.98\linewidth}{\footnotesize \emph{Note:} The dipole moments and decay rates are obtained from the ARC for $^{133}\mathrm{Cs}$ at $300$ K. The coupling Rabi frequency is searched over $\Omega_c/2\pi\in[0.4,15]$ MHz, corresponding approximately to $P_c\in[15,48]$ dBm under the adopted beam waist setting. Each operating point is evaluated from the full Liouvillian steady state and RF small signal linearization. The atomic number density is computed from the saturated Cs vapor density at the set temperature using \cite[Eq.~(1)]{steckCesiumDLineData}.}
	\end{table}   

	The proposed scheme, termed Joint Atomic Response MB-RAR, jointly optimizes $\Omega_c$ and $\alpha$ via Algorithm~\ref{alg:RCJO}. We compare it with the following baselines.
	\begin{itemize}
		\item \textbf{Communication Centric MB-RAR:} The coupling Rabi frequency is fixed at the operating point that maximizes the communication rate.
		\item \textbf{Sensing Centric MB-RAR:} The coupling Rabi frequency is fixed at the operating point that minimizes the sensing CRB.
		\item \textbf{RAR-ISAC (time sharing):} The receiver switches between the communication only and sensing only atomic operating endpoints.
		\item \textbf{Conventional RF-ISAC (time sharing):} A classical RF receiver baseline with time sharing \cite{hua2023mimo}.
	\end{itemize}
	
	\begin{figure}[!t]
		\centering
		\includegraphics[width=3.1in]{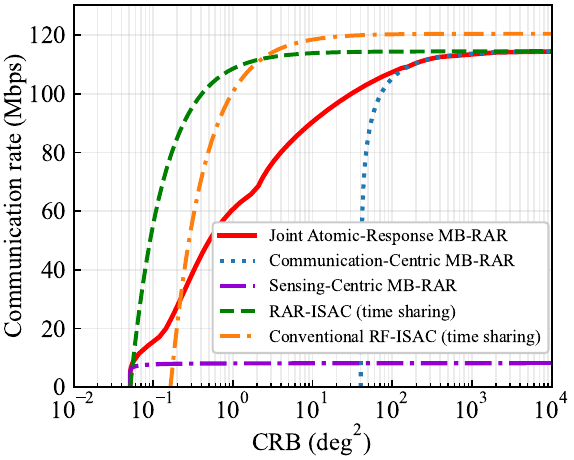}
		\caption{The CRB-rate region for the task of single target angle estimation.}
		\label{fig:Pareto_Front}
	\end{figure}
	
	\subsection{Characterization of the CRB-Rate Pareto Frontier}
	\label{sec:pareto_numerical}	
	
		Fig.~\ref{fig:Pareto_Front} depicts the CRB-rate regions achieved by the Joint Atomic Response MB-RAR, the Communication Centric and Sensing Centric MB-RAR, and the RAR-ISAC and Conventional RF-ISAC (time sharing) baselines under the default parameters. All boundaries exhibit the same monotonic trend: a higher communication rate is attained only at the cost of a larger sensing CRB. In MB-RAR, this tradeoff is dictated by the coupled variation of bandwidth, transconductance, and branch noise through the shared optical channel. The two ends of the figure reveal what limits each function. Toward the high-rate end, the achievable rate is bounded by the receive SNR rather than by the bandwidth. The conventional RF receiver attains the largest rate even though its fixed bandwidth is narrower than the RAR bandwidth near this endpoint, because its SNR is limited only by electrical noise. By contrast, the RAR SNR is further degraded by the finite RF-to-optical conversion and IBGI. For CRB, RAR offers a sensing advantage, whereas its rate is currently limited by conversion efficiency and quantum noise, which suggests the key direction for improving RAR-based ISAC.
		
		Between the two ends, the figure shows how effectively each scheme balances the two functions, which is governed by two factors: adaptability and concurrency. In terms of adaptability, the Joint Atomic Response MB-RAR encloses both fixed-coupling boundaries and achieves notably higher rates in the intermediate-CRB region. This gain arises because the coupling that best balances the two branches varies with the CRB requirement. Joint control retunes the common optical response accordingly, whereas each fixed setting is well matched only near its own endpoint. In terms of concurrency, RAR-ISAC (time sharing) exceeds simultaneous operation over part of the region. By dedicating each slot to a single-function operating point, time sharing removes the inter-branch gain competition present under concurrent operation. It pays for this through the time split, which lowers both the average rate and the accumulated Fisher information, and under the present parameters the former effect prevails within that part of the region. Where this gain vanishes, simultaneous operation is preferable. Moreover, simultaneous operation remains the only option when both functions must be continuously available, and the joint boundary quantifies its achievable performance.

\subsection{Communication Rate under Multi-Target Sensing Loads}
\label{sec:multi_target_load}

\begin{figure}[!t]
    \centering
    \includegraphics[width=\columnwidth]{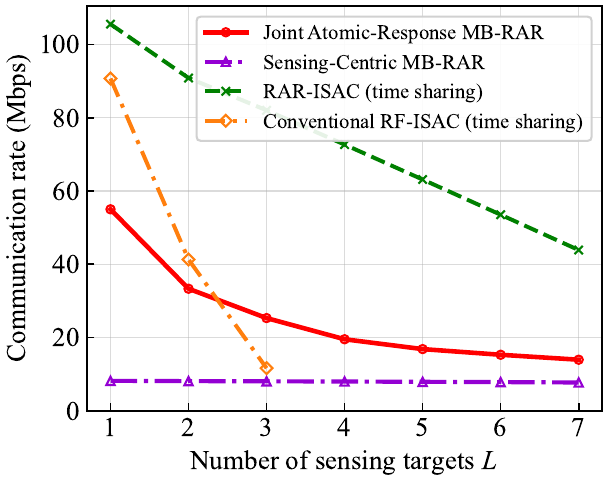}
    \caption{Communication rate versus the number of sensing targets under an average AoA CRB limit of $0.5~\mathrm{deg}^2$. Total echo power, array size, and snapshot count are fixed. Only feasible operating points are shown.}
    \label{fig:Multi_Target}
    \label{fig:multi_target_rate}
\end{figure}

	Fig.~\ref{fig:Multi_Target} evaluates the communication rate that each scheme can sustain as the sensing load grows, subject to the average AoA-CRB requirement $\overline C_L\le C_{\mathrm{lim}}$ with $C_{\mathrm{lim}}=0.5~\mathrm{deg}^2$. We consider $L=1,\ldots,7$ separable targets with $M=8$, $K=100$, $T=300$~K, and a fixed total echo power of $-118$~dBm, which is divided equally among the targets. The single target is placed at broadside, whereas for $L>1$ the target angles are uniformly spaced over $[-30^\circ,30^\circ]$. The remaining parameters follow Table~\ref{tab:sim_params}.
	
	Fig.~\ref{fig:Multi_Target} shows that the achievable rate of every feasible scheme decreases as the number of targets increases. Since the total echo power is fixed, each return carries only $1/L$ of it. Because the CRB is inversely proportional to the echo power, as indicated by \eqref{eq:CRB_sensing_final}, the Fisher information available for each target shrinks accordingly. Meeting the same average bound therefore requires a larger share of the receiver resources to be devoted to sensing, and this share is taken from communication. In this sense, the sensing load acts as an increasingly tight constraint that shifts the operating point along the CRB-rate boundary in Fig.~\ref{fig:Pareto_Front} toward the small-CRB end.
	
	The figure further reveals that the sensing load first determines feasibility before it determines rate. The Conventional RF-ISAC (time sharing) curve terminates once its pure-sensing bound $\overline C_{L,\mathrm S}$ exceeds $C_{\mathrm{lim}}$. Beyond this point, no time allocation can satisfy the requirement, since devoting the entire frame to sensing cannot outperform the all-sensing limit. Similarly, the Communication Centric MB-RAR yields no feasible point at any $L$, because its fixed coupling cannot provide sufficient sensing gain-to-noise ratio under this bound. This result highlights that the sensing advantage of RAR identified in Section~\ref{sec:pareto_numerical} directly extends the range of sensing loads that can be supported.
	
	Among the feasible schemes, the differences in both rate level and rate decay stem from how each scheme accommodates the additional sensing load. RAR-ISAC (time sharing) attains the highest rates in this comparison. Each interval operates at a pure-function atomic setting free of simultaneous IBGI, and the growing load is absorbed only through the time fraction, which leads to the linear decrease of $R$. The Joint Atomic Response MB-RAR absorbs the load by retuning the coupling strength together with the LO split, moving toward more sensitive atomic operating points at the cost of communication bandwidth and gain. By contrast, the Sensing Centric MB-RAR can adapt only the LO allocation. Since its fixed coupling already confines communication to a narrow response band, LO reallocation mainly changes the SNR within that band, which explains its lower rate and weaker dependence on $L$.

	\subsection{Performance Analysis of Communication and Sensing}
	
	Fig.~\ref{fig:parameter_sweeps} examines how the communication and sensing performance scale with the array size, RF power, and cell temperature. 
	Figs.~\ref{fig:parameter_sweeps}\subref{fig:sweep_rate_M} and~\ref{fig:parameter_sweeps}\subref{fig:sweep_crb_M} show that enlarging the array benefits both functions. As shown in Fig.~\ref{fig:parameter_sweeps}\subref{fig:sweep_rate_M}, receive combining raises the communication SNR, so the rate increases with $M$. More importantly, the MB-RAR curves overtake the Conventional RF-ISAC (time sharing) curve at larger arrays. In Fig.~\ref{fig:parameter_sweeps}\subref{fig:sweep_crb_M}, the CRB decreases with $M$ through the aperture factor $M(M^2-1)$ in \eqref{eq:CRB_sensing_final}.

	\begin{figure*}[!t]
		\centering
		\subfigure[Communication rate versus $M$.]
		{\includegraphics[width=0.32\textwidth]{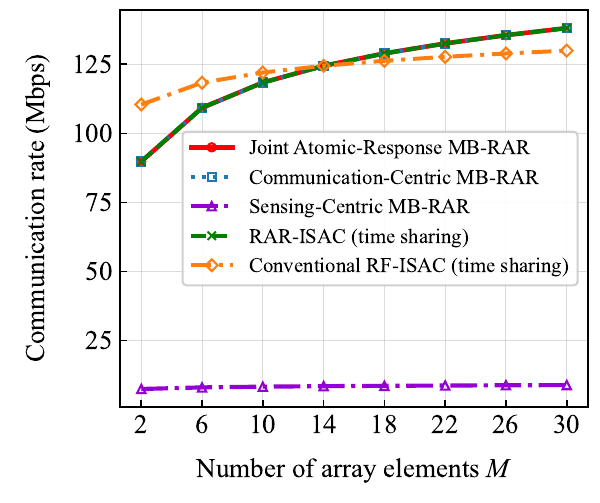}\label{fig:sweep_rate_M}}\hfill
		\subfigure[Communication rate versus $P_{\mathrm{tx}}$.]
		{\includegraphics[width=0.32\textwidth]{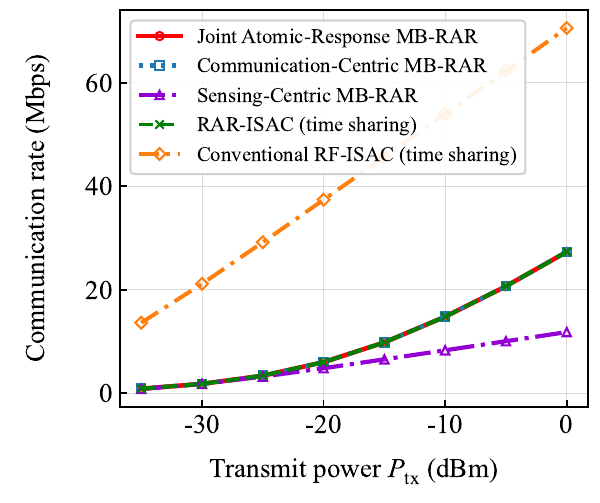}\label{fig:sweep_rate_power}}\hfill
		\subfigure[Communication rate versus temperature $T$.]
		{\includegraphics[width=0.32\textwidth]{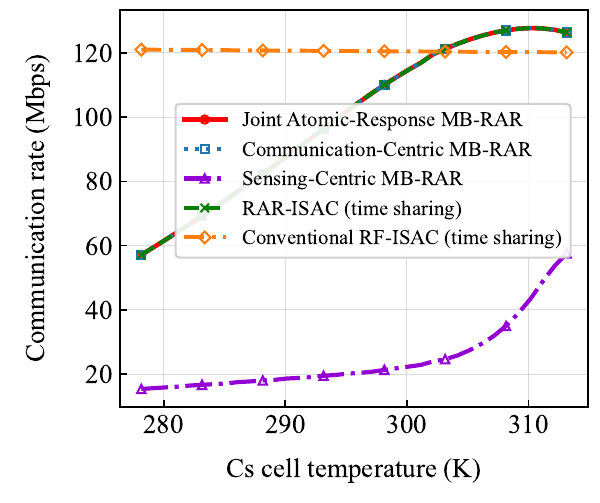}\label{fig:sweep_rate_temperature}}
		\par\vspace{0.5em}
		\subfigure[Sensing CRB versus $M$.]
		{\includegraphics[width=0.32\textwidth]{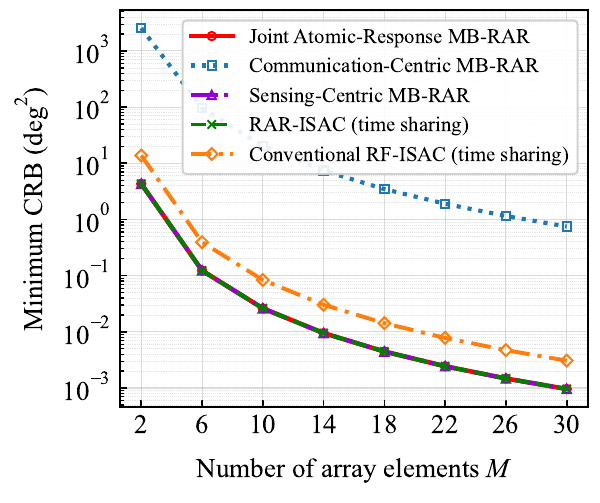}\label{fig:sweep_crb_M}}\hfill
		\subfigure[Sensing root-mean-square error (RMSE) versus $P_{\mathrm{echo}}$.]
		{\includegraphics[width=0.32\textwidth]{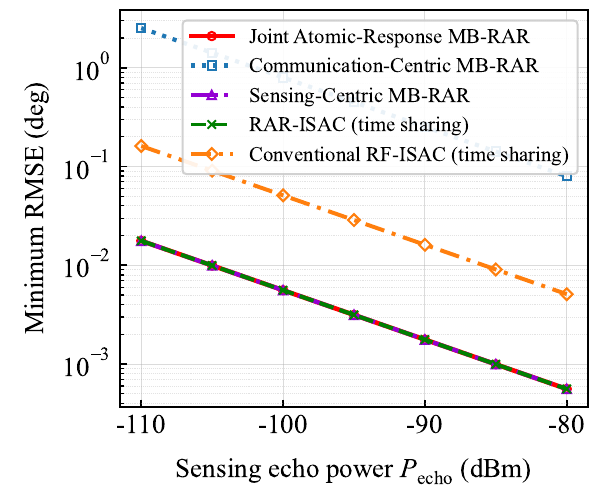}\label{fig:sweep_rmse_power}}\hfill
		\subfigure[Sensing CRB versus temperature $T$.]
		{\includegraphics[width=0.32\textwidth]{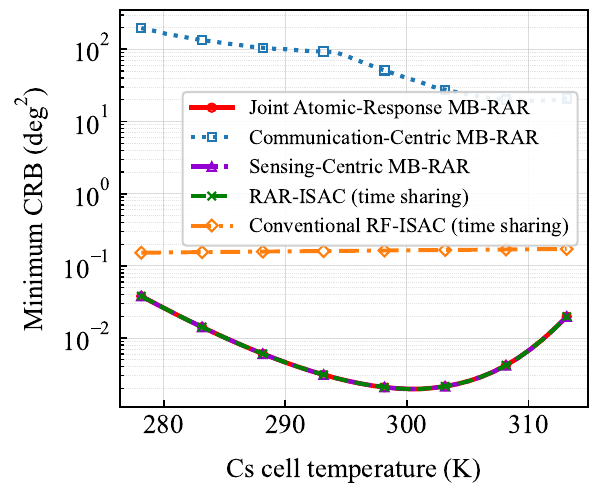}\label{fig:sweep_crb_temperature}}
		\caption{Communication and sensing performance under array-size, RF-power, and temperature sweeps.}
		\label{fig:parameter_sweeps}
		\label{fig:M_vs}
		\label{fig:TXP_vs}
		\label{fig:Temperature_vs}
	\end{figure*}
	
	Figs.~\ref{fig:parameter_sweeps}\subref{fig:sweep_rate_power} and~\ref{fig:parameter_sweeps}\subref{fig:sweep_rmse_power} sweep the communication transmit power $P_{\mathrm{tx}}$ and the incident echo power $P_{\mathrm{echo}}$, respectively. In Fig.~\ref{fig:parameter_sweeps}\subref{fig:sweep_rate_power}, increasing $P_{\mathrm{tx}}$ raises the received signal power without altering the noise at fixed atomic controls, and thus increases the rate of all schemes. The gap between the Joint Atomic Response and Communication Centric MB-RAR curves and the Sensing Centric MB-RAR curve widens with $P_{\mathrm{tx}}$. In Fig.~\ref{fig:parameter_sweeps}\subref{fig:sweep_rmse_power}, the RMSE curves, evaluated as $\mathrm{RMSE}=\sqrt{\mathrm{CRB}}$, decrease approximately in parallel on logarithmic axes, since \eqref{eq:CRB_sensing_final} indicates that the CRB is inversely proportional to $P_{\mathrm{echo}}$. The Joint Atomic Response and Sensing Centric MB-RAR improve this ratio, whereas the Communication Centric MB-RAR sacrifices it for communication bandwidth. Consequently, the sensing advantage of these two MB-RAR schemes remains approximately constant over the considered power range.

	Figs.~\ref{fig:parameter_sweeps}\subref{fig:sweep_rate_temperature} and~\ref{fig:parameter_sweeps}\subref{fig:sweep_crb_temperature} evaluate the influence of the cell temperature, which determines the atomic density. Following the saturated vapor-pressure model in \cite[Eq.~(1)]{steckCesiumDLineData}, the Cs vapor pressure is given by
	\begin{equation}
		\log_{10}\left(\frac{P_v(T)}{1~\mathrm{Torr}}\right)
		=
		\begin{cases}
			7.592-\dfrac{3999}{T}, & \text{(solid phase)},\\[4pt]
			7.046-\dfrac{3830}{T}, & \text{(liquid phase)},
		\end{cases}
		\label{eq:cs_vapor_pressure}
	\end{equation}
	where $P_v(T)$ is the saturated Cs vapor pressure. Using the ideal-gas relation, the corresponding saturated atomic number density is $N_{\mathrm{sat}}(T)={133.322~\mathrm{Pa}}/{k_BT}\left[{P_v(T)}/{1~\mathrm{Torr}}\right]~[\mathrm{m}^{-3}]$.

	Unlike the array size and RF power, the temperature affects RAR through two competing effects. On the one hand, a higher atomic density strengthens the optical susceptibility and hence the useful atomic response. On the other hand, the accompanying increase in optical depth attenuates the probe and reduces the transmitted photocurrent. In Fig.~\ref{fig:parameter_sweeps}\subref{fig:sweep_rate_temperature}, the Joint Atomic Response and Communication Centric MB-RAR curves rise over most of the temperature range, surpass the Conventional RF-ISAC (time sharing) curve at higher temperatures, and then turn slightly downward near the upper end. As the density grows, retuning the coupling field improves the attainable combination of bandwidth and receive SNR until probe attenuation begins to prevail. In Fig.~\ref{fig:parameter_sweeps}\subref{fig:sweep_crb_temperature}, the same competition produces an interior CRB minimum for the Joint Atomic Response and Sensing Centric MB-RAR. Below this minimum, the stronger atomic response improves the sensing gain-to-noise ratio. Beyond it, probe attenuation and the reoptimized conversion slope reduce the sensing transconductance by more than the accompanying reductions in shot noise and effective sensing bandwidth can compensate, so the CRB rises again. This balance is governed by the gain and bandwidth-dependent noise in \eqref{eq:CRB_sensing_final}, together with the temperature-dependent Doppler distribution and decay rates. The Conventional RF-ISAC (time sharing) baseline, which has no dependence on atomic density or optical conversion, varies only mildly and degrades slowly as the thermal-noise PSD increases.

	\section{Conclusion}
	\label{sec:conclusion}
	
In this paper, we investigated the fundamental C-R tradeoff of RAR-based ISAC. By solving the five-level Lindblad master equation at the LO-biased operating point and linearizing the RF signals as small perturbations, a Doppler-averaged double-sideband response model of the MB-RAR was established, which connects the microscopic atomic dynamics to the communication rate and the AoA CRB through the branch quantum transconductance. Based on this model, two intrinsic tradeoffs that have no counterpart in classical RF-ISAC were revealed. First, a stronger coupling laser broadens the dynamic response bandwidth but can weaken the atomic response. Second, strengthening one LO branch suppresses the conversion gain of the other branch through the shared Rydberg coherence, even when the latter's LO bias remains fixed, which was termed IBGI. Then, the achievable C-R region was formulated via the $\epsilon$-constraint method, and an RCJO algorithm was developed to characterize its Pareto boundary. Simulation results have demonstrated that joint atomic control yields a more complete C-R region than fixed-coupling operation, and that time sharing becomes preferable where inter-branch competition is strong. Compared with conventional RF-ISAC, the RAR achieves smaller sensing CRBs and remains feasible under heavier multi-target sensing loads. 

Future work will focus on the experimental validation of the MB-RAR response model, distributed RAR cooperation, reconfigurable microwave-transition selection, and physics-informed AI models for fast atomic-response calibration and optimization.

	\appendices
	\section{LO Sensitivities and Proof of Lemma~\ref{lem:cross_damping}}
	\label{app:proof_cross_damping}
	
	From $g_b=U_b\mathcal A_b$ and \eqref{eq:branch_gain_elasticities},
	\begin{equation}
		\partial_\alpha\ln g_b
		=\partial_\alpha\ln U_b
		+\sum_{b'\in\{\mathrm{com},\mathrm{sens}\}}
		\eta_{bb'}\partial_\alpha\ln U_{b'}.
		\label{eq:app_chain}
	\end{equation}
	Substitution of $\partial_\alpha\ln U_{\mathrm{com}}=1/(2\alpha)$ and $\partial_\alpha\ln U_{\mathrm{sens}}=-1/[2(1-\alpha)]$ yields \eqref{eq:communication_gain_competition} and \eqref{eq:cross_damping}. The following sensitivity equations determine the coefficients from the atomic response.

	Write $\Omega_{b',\mathrm{LO}}=U_{b'}e^{j\phi_{b'}}$, where $\phi_{b'}$ is a fixed LO phase. By \eqref{eq:branch_perturbation_superoperators}, the derivative of the unperturbed Liouvillian is
	\begin{equation}
		\frac{\partial\mathcal L_0(v)}{\partial U_{b'}}
		=e^{j\phi_{b'}}\mathcal L_{b'}^{\mathrm u}(v)
		+e^{-j\phi_{b'}}\mathcal L_{b'}^{\mathrm l}(v).
		\label{eq:liouvillian_lo_derivative}
	\end{equation}
	Consider a smooth operating point with a unique steady state and decaying trace-zero modes. Differentiating \eqref{eq:liouvillian_working_point} gives
	\begin{equation}
		\begin{aligned}
		\mathcal L_0(v)\frac{\partial\bar{\mathbf x}(v)}{\partial U_{b'}}
		&=-\frac{\partial\mathcal L_0(v)}{\partial U_{b'}}\bar{\mathbf x}(v),\\
		\mathrm{Tr}\!\left(\frac{\partial\bar{\boldsymbol\rho}(v)}{\partial U_{b'}}\right)=\mathbf t^{\mathrm T}\frac{\partial\bar{\mathbf x}(v)}{\partial U_{b'}}&=0.
		\end{aligned}
		\label{eq:steady_state_lo_sensitivity}
	\end{equation}
	This trace-constrained system determines the change in the atomic operating point at each velocity.

	For $\nu\in\{\mathrm u,\mathrm l\}$, let $\mathbf y_b^\nu(\omega;v)$ denote the common-state response to a unit excitation of source channel $\nu$ on branch $b$. The channel labels follow \eqref{eq:branch_dynamic_response}; both LO biases remain in the common working point. It satisfies
	\begin{equation}
		\begin{aligned}
		(j\omega\mathbf I-\mathcal L_0(v))\mathbf y_b^\nu
		&=\mathcal L_b^\nu(v)\bar{\mathbf x}(v),\\
		\mathbf t^{\mathrm T}\mathbf y_b^\nu&=0,
		\end{aligned}
		\label{eq:state_response_trace_system}
	\end{equation}
	with $F_b^\nu(\omega;v)=\mathbf c_{21}^{\mathrm T}\mathbf y_b^\nu(\omega;v)$. Differentiation at fixed $\omega$ yields
	\begin{equation}
		\begin{aligned}
		(j\omega\mathbf I-\mathcal L_0(v))
		\frac{\partial\mathbf y_b^\nu}{\partial U_{b'}}
		&=\frac{\partial\mathcal L_0(v)}{\partial U_{b'}}\mathbf y_b^\nu +\mathcal L_b^\nu(v)
		\frac{\partial\bar{\mathbf x}(v)}{\partial U_{b'}},\\
		\mathbf t^{\mathrm T}\frac{\partial\mathbf y_b^\nu}{\partial U_{b'}}&=0.
		\end{aligned}
		\label{eq:dynamic_response_lo_sensitivity}
	\end{equation}
	The two terms on the right-hand side describe the changes in the evolution operator and the steady-state source, respectively. The trace constraint also specifies the solution at $\omega=0$.

	At fixed temperature, differentiating \eqref{eq:branch_sideband_response_doppler} and applying \eqref{eq:branch_iq_response} gives
	\begin{equation}
		\frac{\partial G_{b,\mathrm I}^{\mathrm D}(\omega)}{\partial U_{b'}}
		=\int_{-\infty}^{\infty}\mathbf c_{21}^{\mathrm T}
		\left(\frac{\partial\mathbf y_b^{\mathrm u}}{\partial U_{b'}}
		+\frac{\partial\mathbf y_b^{\mathrm l}}{\partial U_{b'}}\right)
		f_T(v)\,\mathrm dv.
		\label{eq:doppler_response_lo_sensitivity}
	\end{equation}
	The steady-state susceptibility derivative follows from \eqref{eq:chi_dynamic}:
	\begin{equation}
		\frac{\partial\bar\chi^{\mathrm D}}{\partial U_{b'}}
		=-\frac{2N|\mu_{21}|^2}{\epsilon_0\hbar\Omega_p}
		\int_{-\infty}^{\infty}\mathbf c_{21}^{\mathrm T}
		\frac{\partial\bar{\mathbf x}(v)}{\partial U_{b'}}f_T(v)\,\mathrm dv.
		\label{eq:susceptibility_lo_sensitivity}
	\end{equation}
	Define $\mathcal G_b\triangleq\mathcal M\{G_{b,\mathrm I}^{\mathrm D}\}(\omega_b)$, the projected branch response at its IF. At fixed IF, its derivative is obtained by applying the same projection, including the negative-frequency conjugate, to \eqref{eq:doppler_response_lo_sensitivity}. Differentiating \eqref{eq:IDC_liouvillian} and \eqref{eq:g_def}, and using $\mathcal A_b=g_b/U_b$, yields
	\begin{equation}
		\begin{aligned}
		\eta_{bb'}
		&=-U_{b'}k_p\ell_{\mathrm{cell}}
		\operatorname{Im}\!\left\{\frac{\partial\bar\chi^{\mathrm D}}{\partial U_{b'}}\right\}\\
		&\quad+\frac{U_{b'}}{|\mathcal G_b|^2}
		\operatorname{Re}\!\left\{\mathcal G_b^*
		\frac{\partial\mathcal G_b}{\partial U_{b'}}\right\}
		-\delta_{bb'},
		\end{aligned}
		\label{eq:gain_elasticity_liouvillian}
	\end{equation}
	where $\delta_{bb'}$ is one for $b=b'$ and zero otherwise. Equations~\eqref{eq:steady_state_lo_sensitivity}--\eqref{eq:gain_elasticity_liouvillian} complete the evaluation of the branch-gain derivatives in Lemma~\ref{lem:cross_damping}.

	\section{Proof of Lemma~\ref{lem:CRB_AoA}}
	\label{app:proof_CRB_AoA}
	For brevity, define the effective channel gain as $h_{\mathrm{sens}}=\sqrt{\eta_0P_{\mathrm{echo}}R_{\mathrm{L}}/A_{\mathrm{e}}}\,\ell_{\rm cell}g_{\mathrm{sens}}$ and let $\boldsymbol{\mu}_k(\theta,s_k)=h_{\mathrm{sens}}s_k\mathbf{a}(\theta)$. Then, the $K$ snapshots are independent with $\mathbf{y}_{\mathrm{sens},k}\sim \mathcal{CN}(\boldsymbol{\mu}_k, \sigma_{n,\mathrm{sens}}^2\mathbf{I}_M)$. We define the unknown parameter vector as $\boldsymbol{\xi}=[s_{R,1}, s_{I,1},\ldots,s_{R,K}, s_{I,K}, \theta]^{\mathrm T}$, with $s_k = s_{R,k} + j s_{I,k}$. The nonzero partial derivatives are
	\begin{equation}
		\frac{\partial \boldsymbol{\mu}_k}{\partial s_{R,k}} \!=\! h_{\mathrm{sens}}\mathbf{a}(\theta),
		\frac{\partial \boldsymbol{\mu}_k}{\partial s_{I,k}} \!=\! j h_{\mathrm{sens}}\mathbf{a}(\theta),
		\frac{\partial \boldsymbol{\mu}_k}{\partial \theta} \!=\! h_{\mathrm{sens}} s_k\dot{\mathbf{a}}(\theta).
	\label{eq:crb_mean_derivatives}
\end{equation}
	The Fisher information matrix (FIM) can be partitioned as $\mathbf{J} \!\!=\!\! \begin{bmatrix} \mathbf{J}_{ss} & \mathbf{J}_{s\theta}\\ \mathbf{J}_{\theta s} & J_{\theta\theta} \end{bmatrix}$. Since $\mathbf J_{ss}$ is block diagonal over the snapshots, eliminating the $2K$ nuisance parameters via the Schur complement yields the equivalent FIM for $\theta$
	\begin{equation}
		\begin{aligned}
		\mathcal{J}(\theta) &= J_{\theta\theta} - \mathbf{J}_{\theta s}\mathbf{J}_{ss}^{-1}\mathbf{J}_{s\theta}\\
		&= \frac{2|h_{\mathrm{sens}}|^2}{\sigma_{n,\mathrm{sens}}^2}\sum_{k=1}^{K}|s_k|^2 \|\mathbf{P}_{\perp}(\theta)\dot{\mathbf{a}}(\theta)\|^2\\
		&=\frac{2K|h_{\mathrm{sens}}|^2}{\sigma_{n,\mathrm{sens}}^2} \|\mathbf{P}_{\perp}(\theta)\dot{\mathbf{a}}(\theta)\|^2,
		\end{aligned}
	\label{eq:crb_schur_information}
\end{equation}
	where the last equality uses $|s_k|=1$, and
	$\mathbf{P}_{\perp}(\theta)$ is the orthogonal projector onto the null space of $\mathbf{a}(\theta)$. For a ULA, shifting the element index to $m=0, \dots, M-1$ without loss of generality, the phase term is $\psi=kd\sin\theta$, and $a_m(\theta)=e^{jm\psi}$. Consequently, the derivative is $\dot{a}_m(\theta) = j m (kd\cos\theta)e^{jm\psi}$. Substituting this derivative and $\|\mathbf a(\theta)\|^2=M$ into $\|\mathbf P_{\perp}\dot{\mathbf a}\|^2=\|\dot{\mathbf a}\|^2-|\mathbf a^{\mathrm H}\dot{\mathbf a}|^2/M$ gives
	\begin{equation}
		\|\mathbf{P}_{\perp}(\theta)\dot{\mathbf{a}}(\theta)\|^2 = (kd\cos\theta)^2\frac{M(M^2-1)}{12}.
	\end{equation}
	Substituting this result into $\mathrm{CRB}(\theta) = \mathcal{J}^{-1}(\theta)$ completes the proof of \eqref{eq:CRB_sensing_final}.
	
	\bibliographystyle{IEEEtran}
	\bibliography{Ref}
\end{document}